\documentclass{article}
\usepackage{fullpage}
\usepackage[utf8]{inputenc}
\usepackage[all=normal,bibliography=tight]{savetrees}

\usepackage{amstext,amsfonts,amsthm,amsmath,amssymb}
\usepackage{graphicx,tikz}
\usepackage[colorlinks=true,citecolor=red]{hyperref}
\usepackage[capitalize]{cleveref}
\usepackage{enumerate}   
\usepackage[linesnumbered,ruled,vlined]{algorithm2e}


\newtheorem{lemma}{Lemma}[section]
\newtheorem{theorem}[lemma]{Theorem}
\newtheorem{claim}[lemma]{Claim}
\newtheorem{observation}[lemma]{Observation}
\newtheorem{proposition}[lemma]{Proposition}
\newtheorem{corollary}[lemma]{Corollary}
\newtheorem{question}[lemma]{Problem}

\theoremstyle{definition}
\newtheorem{definition}[lemma]{Definition}


\newcommand{\defproblem}[3]{
 \vspace{2mm}
\noindent\fbox{
 \begin{minipage}{0.96\textwidth}
 \begin{tabular*}{\textwidth}{@{\extracolsep{\fill}}lr} #1 & \\ \end{tabular*}
 {\textbf{Input:}} #2 \\
 {\textbf{Question:}} #3
 \end{minipage}
 }
 \vspace{2mm}
}
\newcommand{\M}{M}
\newcommand{\colres}{\texttt{p}}
\def\r{\texttt{r}}

\def\auxi{\texttt{aux}}
\def\pos{\texttt{pos}}
\def\poss{\mathrm{pos}}
\def\process{{\texttt{process} }}

\DeclareMathOperator{\tw}{tw}

\DeclareMathOperator{\tpw}{tpw}

\newcommand{\N}{\mathbb{N}}

\title{List Colouring Trees in Logarithmic Space}
\author{Hans L. Bodlaender\footnote{Utrecht University, \texttt{h.l.bodlaender@uu.nl}.} \quad Carla Groenland\footnote{Utrecht University, \texttt{c.e.groenland@uu.nl}. Supported by the project CRACKNP that has received funding from the European Research Council (grant agreement No 853234).} \quad Hugo Jacob\footnote{ENS Paris-Saclay, hjacob@ens-paris-saclay.fr.}}
\date{\today}

\begin{document}
\maketitle 
\begin{abstract}
    We show that \textsc{List Colouring} can be solved on $n$-vertex trees by a deterministic Turing machine using $O(\log n)$ bits on the worktape. Given an $n$-vertex graph $G=(V,E)$ and a list $L(v)\subseteq\{1,\dots,n\}$ of available colours for each $v\in V$, a list colouring for $G$ is a proper colouring $c$ such that $c(v)\in L(v)$ for all $v$.
\end{abstract}

\section{Introduction}
Various applications can be modelled as an instance of \textsc{List Colouring}, e.g.,
the vertices may correspond to communication units, with lists giving the possible frequencies or channels that a vertex may choose from as colours and edges showing which units would interfere if they are assigned the same colour \cite{GPT,RBAB}. 

Given a graph $G=(V,E)$ and given a list $L(v)$ of colours for each vertex $v\in V$, an \textit{$L$-colouring} $c$ is a
proper colouring (that is, $c(u)\neq c(v)$ when $uv\in E$) mapping every vertex $v$ to a colour in the list $L(v)$. This gives rise to the following computational problem.

\defproblem{\sc List Colouring}{A graph $G=(V,E)$ with a list $L(v)\subseteq \{1,\dots,n\}$ of available colours for each $v\in V$.}{Is there an $L$-colouring for $G$?}\\

\textsc{List Colouring} is computationally hard. It is NP-complete on cographs \cite{JansenScheffler97} and on planar bipartite graphs, even when all lists are of size at most 3 \cite{GravierKoblerKubiak02}.
The problem remains hard when parameterised by `tree-like' width measures: it was first shown to be W$[1]$-hard parameterised by treewidth in 2011 by \cite{FellowsFLRSST11} and recently shown to be XNLP-hard implying W[$t$]-hardness for all $t$ by \cite{XNLP-comp}.
On the other hand, on $n$-vertex trees the problem can be solved in time linear in $n$ (using hashing)  \cite{JansenScheffler97}, but this algorithm may use $\Omega(n)$ space.

In this paper, we study the auxiliary space requirements of \textsc{List Colouring} on trees in terms of the number of vertices $n$ of the tree. We assume that the vertices of $T$ have been numbered $1,\dots,n$, which gives a natural order on them, and that, given vertices $v,v'$ in $T$ and $i,i'\in \{1,\dots,n\}$, it can be checked in $O(\log n)$ space whether the $i$th colour in $L(v)$ equals the $i'$th colour in $L(v')$. As is usual for the complexity class L (logspace), we measure the space requirements in terms of the number of bits on the work tape of a deterministic Turing machine, where the description of the tree and the lists are written on a (read-only) input tape. In particular, the number of bits on the input tape is allowed to be much larger. 

Since $n$-vertex trees have pathwidth $O(\log n)$, our problem can be solved non-deterministically using $O(\log^2 n)$ bits on the work tape (see Proposition \ref{prop:max_degree}). However, doing better than this is surprisingly challenging, even in the non-deterministic case!
Our main result is as follows. 
\begin{theorem}
\label{thm:main}
{\sc List Colouring} for trees is in \textup{L}.
\end{theorem}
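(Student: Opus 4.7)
The starting observation is that, on a rooted tree, the standard dynamic program for list-colouring has a remarkably compact summary. For each vertex $v$ let $S(v)\subseteq L(v)$ denote the set of colours $c$ for which the subtree rooted at $v$ admits an $L$-colouring that assigns $c$ to $v$. A short induction gives
\[
S(v) \;=\; L(v) \setminus \{c : \exists \text{ child } u \text{ of } v \text{ with } S(u)=\{c\}\}
\]
whenever every child $u$ satisfies $S(u)\neq\emptyset$, and $S(v)=\emptyset$ otherwise. In particular, $S(v)$ is entirely determined by a three-valued \emph{type} $\tau(v)$: either $\texttt{Empty}$ (when $S(v)=\emptyset$), $\texttt{Big}$ (when $|S(v)|\ge 2$), or $\texttt{Sing}(c)$ (when $S(v)=\{c\}$), and the input is a yes-instance exactly when $\tau(r)\neq\texttt{Empty}$ at the root $r$.

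The algorithmic task is thus to compute $\tau(r)$ in logspace. The plan is to reduce this to deciding, given a pair $(v,c)$, the Boolean predicate $q(v,c):=[c\in S(v)]$, since iterating $c$ over $L(v)$ and counting to $2$ recovers $\tau(v)$ using only $O(\log n)$ extra bits. The natural recursion
\[
q(v,c) \;=\; [c\in L(v)] \;\wedge\; \bigwedge_{u \text{ child of } v} \bigvee_{c'\in L(u)\setminus\{c\}} q(u,c')
\]
has recursion depth equal to the depth of $T$, which may be $\Theta(n)$, so a direct implementation is hopeless.

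To bound the effective recursion depth, I would use a heavy-path decomposition: each internal vertex singles out the child of largest subtree (ties broken by vertex index) as \emph{heavy}, and the others as \emph{light}. Every root-to-leaf path crosses $O(\log n)$ light edges, and below a light child the subtree has at most half the vertices of its parent's subtree. We evaluate $q(v,c)$ by walking the heavy path below $v$ iteratively from the bottom upwards, recursing only on the subtrees rooted at light children and accumulating the type. All required tree-navigation primitives (parent, child-ordering, subtree size, heavy child) are available in L since they reduce to elementary counting queries over the read-only input tape.

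The hard part will be to trim the total work-tape usage from the $O(\log^2 n)$ of a direct implementation down to the desired $O(\log n)$: at each of the $O(\log n)$ recursion levels one naively stores the current vertex on the heavy path, the currently tested colour, and the accumulated type, each of which is $\Theta(\log n)$ bits. The plan to overcome this is to carry essentially no per-level state across a recursive call, and instead to reconstruct the calling context on demand from the tree structure. After returning from a recursive call on a light child $u$, both the parent $v=\mathrm{parent}(u)$ on the current heavy path and the accumulator built from the siblings of $u$ processed before $u$ are re-derived by walking back up to $v$ and re-sweeping those siblings, invoking the same procedure recursively on each in turn. These re-sweeps are performed sequentially, so their workspaces do not coexist with ongoing callers' state; combined with the fact that each light-child recursion is on a subtree of at most half the size, the space recurrence becomes $S(n)\le S(n/2)+O(1)$, yielding $S(n)=O(\log n)$ and hence Theorem~\ref{thm:main}.
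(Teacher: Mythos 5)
The proposal correctly identifies several ingredients that also appear in the paper: the three-valued summary of a subtree (Empty/Big/Singleton, the paper's ``criticality''), the use of a heavy-path decomposition to bound the number of light edges along any root--leaf path by $O(\log n)$, and the general strategy of discarding and later recomputing the caller's context rather than stacking it. These are the right observations for the warm-up $O(\log^2 n)$ algorithm (Section~\ref{subsec:warm-up}). However, the final step of the proposal --- the claim that ``carrying essentially no per-level state'' yields the recurrence $S(n)\le S(n/2)+O(1)$ --- is not justified, and this is precisely where the paper's actual work begins.

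The gap is that ``reconstruct the calling context on demand from the tree structure'' is not free: after a recursive call on a light child $u$ returns, the procedure must still know \emph{where} $u$ is (which vertex on the heavy path is its parent, and which of that vertex's light children it is), and \emph{where in the colour iteration} the parent was when it issued the call. None of this can be deduced from the tree alone, since it depends on the outcomes of earlier sibling computations and on the loop position in $L(v)$. Recording a light-child index or a colour position naively costs $\Theta(\log n)$ bits, which takes you back to $S(n)\le S(n/2)+\Theta(\log n)=O(\log^2 n)$. A vertex can have $\Theta(n)$ light children, so there is no generic $O(1)$-bit pointer to one of them. Moreover, even if one computes type information bottom-up, a $\texttt{Sing}(c')$ type must eventually be compared with a colour $c$ of the parent, again $\Theta(\log n)$ bits; the proposal never says how this comparison is done without transporting a full colour across a recursion.

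What is missing is the paper's central mechanism (Sections~\ref{subsec:brack}--\ref{subsec:pos}): processing light children not merely ``in arbitrary order'' but grouped into doubly-exponential size brackets $[n/2^{2^j},n/2^{2^{j-1}})$; encoding the tested colour via its position $p_j$ in a shrinking implicit list $L_j(v)$, so that when recursing into a bracket-$j$ subtree only $O(2^j)$ bits need be retained; and encoding the current heavy-path position via a variable-length string $\poss(v,i)$ whose length likewise scales with the bracket. The whole point is that the state saved before a recursion is allowed to grow with the factor by which the subtree size drops, so that the sum telescopes as $\sum_j O(2^{j-1})=O(\log n)$. Without this size-dependent encoding, the re-sweep idea has no way to locate itself using $O(1)$ bits, and the stated recurrence does not follow.
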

Our initial interest in the space complexity of \textsc{List Colouring} on trees arose from a recent result showing that  \textsc{List Colouring} parameterised by pathwidth is XNLP-complete \cite{XNLP-comp}.
XNLP is the class of problems on an input of size $n$ with parameter $k$, which can be solved by a non-deterministic Turing machine in $f(k)n^{O(1)}$ time and $f(k)\log n$ space for some computable $f$. Since the treewidth of a graph is upper bounded by the pathwidth, \textsc{List Colouring} is also XNLP-hard parameterised by treewidth. This is conjectured\footnote{The conjecture (see \cite[Conjecture 2.1]{PW18} or \cite[Conjecture 5.1]{XNLP-comp}) states that there is no deterministic algorithm for an XNLP-hard problem that runs in XP time and FPT space. If for each $k$, there exists a constant $f(k)$ such that \textsc{List Colouring} can be solved in space $f(k)\log n$ on $n$-vertex graphs of treewidth $k$, then this would in particular yield an algorithm running in $n^{f(k)}$ time and $f(k)n^{O(1)}$ space.} to imply that there is a constant $k^*$  for which any deterministic Turing machine needs $\omega(\log n)$ space in order to solve \textsc{List Colouring} for $n$-vertex graphs of treewidth $k^*$; this work shows that $k^*>1$. 
It seems likely that  \textsc{List Colouring} parameterised by treewidth is not in XNLP, and we conjecture that it is complete for a parameterised analogue of NAuxPDA (also known as SAC) from \cite{AllenderCLPT14,PW18} .
Considering such classes which (also) have space requirements (complexity classes such as XL, XNL and XNLP \cite{XNLP-comp,BodlaenderGS21,ChenF03,ElberfeldST15})  has proven successful in classifying the complexity of parameterised problems which are not known to be complete for any classes that only consider time requirements. Since some of such classes are very naturally modelled by instances of \textsc{List Colouring}, we believe the complexity class of \textsc{List Colouring} on trees could be of theoretical interest as well.

Another motivation for studying space requirements comes from practice, 
since memory can be much more of a bottleneck than
processing time (e.g. for dynamic programming approaches). This motivates the development of techniques to reduce the space complexity. Although many techniques have been established to provide algorithms which are efficient with respect to time, fewer techniques are known to improve the space complexity. Notable exceptions include the logspace analogue of Bodlaender's and Courcelle's theorem \cite{ElberfeldJT10} which allows one to check any monadic second-order formula on graphs of bounded treewidth in logspace (which in particular allows one to test membership in any minor-closed family) and Reingold's \cite{Reingold} work on undirected connectivity. Reachability and isomorphism questions have also been well-studied on restricted graph classes, e.g.  \cite{Datta,direachgenus,Lindelltreeisom}. 
Another interesting piece of related work \cite{EgriHLR14} shows that for each graph $H$, \textsc{List $H$-Colouring} is either in L or NL-hard. We remark that \textsc{List $H$-Colouring} on trees
(for fixed $H$)
is easily seen to be solvable in logspace using an analogue of Proposition \ref{prop:max_degree} or the logspace Courcelle's theorem \cite{ElberfeldJT10}. The difficulty in our case comes from the fact that the sizes of the lists are unbounded.

We also generalise our algorithm to graphs of bounded tree-partition-width (also called strong treewidth).
\begin{corollary}
\label{cor:listcol_tpw}
There is a deterministic $O(k\log k \log n)$ space algorithm for \textsc{List Colouring} on $n$-vertex graphs with a given tree-partition of width $k$.
\end{corollary}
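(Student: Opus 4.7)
The plan is to reduce \textsc{List Colouring} on the graph $G$ equipped with its given tree-partition $(T,\{B_t\}_{t\in V(T)})$ of width $k$ to a generalised list-colouring problem on the tree $T$, and then adapt the logspace algorithm behind Theorem~\ref{thm:main} to that setting. Recall that a tree-partition forces every edge of $G$ to have its endpoints either inside a single bag $B_t$ or in two bags $B_t, B_{t'}$ whose nodes are adjacent in $T$.

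First, I would define a \emph{bag-colouring of $B_t$} as any proper $L$-colouring $c_t : B_t \to \{1,\dots,n\}$ of $G[B_t]$, encodable as a $k$-tuple of list-indices using $O(k\log n)$ bits. Two bag-colourings $c_t, c_{t'}$ at adjacent bags $tt'\in E(T)$ are declared \emph{compatible} iff $c_t(u)\neq c_{t'}(v)$ for every edge $uv\in E(G)$ with $u\in B_t$ and $v\in B_{t'}$. By the edge property of tree-partitions, the $L$-colourings of $G$ are then in bijection with choices of pairwise-compatible bag-colourings, one per node of $T$. So the original problem has been reduced to a list-colouring problem on $T$ in which each node's ``list'' is its set of bag-colourings and the adjacency constraint is compatibility rather than inequality.

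Next, I would adapt the algorithm of Theorem~\ref{thm:main} to this enriched setting. That algorithm accesses its input through only two primitives: enumerating the colours available at a tree vertex, and testing the adjacency constraint between two colours at adjacent tree vertices. Both generalise cleanly here: enumerating bag-colourings of $B_t$ runs in $O(k\log n)$ space (iterate over $k$-tuples of list-indices, filter for $L$-membership and for properness on $G[B_t]$), and compatibility of two adjacent bag-colourings can be checked in the same budget by iterating over pairs of vertices and consulting the edge-set of $G$. Substituting ``bag-colouring'' for ``colour'' and ``compatible'' for ``$\neq$'' in the algorithm of Theorem~\ref{thm:main} therefore yields a correct algorithm for the tree-partition instance.

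The hard part will be the space accounting. The obvious representation of a bag-colouring costs $O(k\log n)$ bits, which would naively multiply the $O(\log n)$ space of Theorem~\ref{thm:main} at every place that holds a ``colour''. To reach the claimed $O(k\log k\log n)$ bound I would represent each bag-colouring as a pair $(\pi_t,\sigma_t)$: a partition pattern $\pi_t:B_t\to\{1,\dots,k\}$ identifying the colour classes of $c_t$ (using $O(k\log k)$ bits), together with a class-to-colour map $\sigma_t$ whose concrete values are not kept in memory simultaneously but are re-fetched from the input tape on demand during enumeration and compatibility tests. The delicate step will be going through each subroutine of Theorem~\ref{thm:main}'s algorithm and verifying it can be re-implemented to store only the $O(k\log k)$-bit pattern alongside the original $O(\log n)$-bit bookkeeping, which would give the desired total space of $O(k\log k\log n)$.
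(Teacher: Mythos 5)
The high-level reduction you set up --- viewing bag-colourings as colours for the tree $T$ with compatibility replacing inequality --- is a reasonable normalization, but it is not how the paper proceeds, and the specific space-saving device you propose (a partition pattern $\pi_t$ plus a ``re-fetchable'' class-to-colour map $\sigma_t$) does not close the gap you yourself flag as the ``delicate step.''

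The difficulty, which reappears once you abstract to bag-colourings, is that the entire engine of Theorem~\ref{thm:main} rests on an inequality of the form $|L_j(v)|\le 2^{2^j}+2$ (otherwise $v$ becomes non-critical), which in turn rests on each neighbour contributing \emph{one} forbidden colour. After your reduction, a child bag's colouring can forbid a large chunk of the parent's bag-colourings (any $c_t$ making some cross-edge monochromatic), so ``at most $m$ neighbours $\Rightarrow$ at most $m$ forbidden colours $\Rightarrow$ non-critical if list size $\geq m+3$'' fails outright for your abstract colours. Likewise, the partition pattern $\pi_t$ (who shares a colour with whom) carries essentially no useful information --- in a proper colouring of a bag the pattern is typically the discrete partition, and compatibility across an edge $tt'$ depends on the actual colours, not on the pattern. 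The substance is all in $\sigma_t$, and you do not give a mechanism for reconstructing $\sigma_t$ from a small footprint; ``re-fetched from the input tape'' cannot work because the input tape does not contain the pruned lists $L_j(v)$, and storing an index into $L(v)$ itself is exactly the $\Theta(\log n)$-per-vertex cost the theorem is trying to avoid.

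The paper's proof avoids the reduction entirely and applies criticality \emph{per vertex} of the bag. For $t\in V(T)$ and $v\in X_t$, it defines $L_j(v)$ as the colours of $v$ extendable to a list colouring of the graph spanned by $X_t$ and the bags in brackets $\geq j$; since at most $2^{2^j}$ neighbouring bags of brackets $1,\dots,j$ remain and each contributes at most $k$ constraints to $v$, the threshold becomes $|L_j(v)|\geq k(2^{2^j}+3)$, after which $v$ is non-critical and is dropped. The state for $X_t$ at bracket $j$ is then a $k$-bit criticality mask plus, for each of the $\leq k$ critical vertices, a $\log(k(2^{2^j}+3))$-bit position in $L_j(v)$; the auxiliary information is $O(k\log k)$ bits. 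The recomputation of $p_{j+1}$ from $p_j$ proceeds coordinatewise, exactly as in the tree case, and the whole space analysis is the tree analysis multiplied by $O(k\log k)$. If you want to salvage your plan you should replace $(\pi_t,\sigma_t)$ by the paper's (criticality mask, per-critical-vertex list positions) representation and carry through the $k$-scaled non-criticality bound; the partition pattern will not do the job.
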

The algorithm of Corollary \ref{cor:listcol_tpw} does not run in FPT time.
We also include a simple proof that \textsc{List Colouring} is W[1]-hard when parameterised by tree-partition-width, which shows that it is unlikely that there exists an algorithm running in FPT time. 

We outline some relevant definitions and background in Section \ref{sec:prelim}.
The algorithm of Theorem \ref{thm:main} is highly non-trivial and requires several conceptual ideas that we have attempted to separate out by first explaining some key ideas and an easier deterministic algorithm that uses $O(\log^2 n)$ space in Section \ref{sec:warmup}.
We explain two ideas that need to be added to this, present the algorithm and give the space analysis in Section \ref{sec:proof}. We prove our results concerning tree-partition-width in Section \ref{sec:boundedtpw} and point to some directions for future work in Section \ref{sec:concl}. Some technical details are left to the appendix.
\section{Preliminaries}
\label{sec:prelim}
All logarithms in this paper have base 2. Let $T$ be a rooted tree and $v\in V(T)$. We write $T_v$ for the subtree rooted at $v$ and $T-v$ for the forest obtained by removing $v$ and all edges incident to $v$. 
We refer the reader to textbooks for basic notions in graph theory \cite{TextbookDiestel} and (parameterised) complexity \cite{TextbookAroraBarak,DowneyFellows}.

\subsection{Simple logspace computations on trees}
We repeatedly use the fact that simple computations can be done on a rooted tree using logarithmic space, such as counting the number of vertices in a subtree. We include a brief sketch below and refer to \cite{Lindelltreeisom} for further details.

We first explain how to traverse a tree in logspace.
Record the
index of the current vertex and create states \texttt{down}, \texttt{next} and \texttt{up}. We start on the root with state
\texttt{down}. When in state \texttt{down}, we go to the first child while remaining in the state \texttt{down}. If there is
no child, we change the state to \texttt{next}. When in state \texttt{next}, we go to the next sibling if
it exists and change state to \texttt{down}, or (if there is no next sibling) we change state to
\texttt{up}. When in state \texttt{up}, we simultaneously go to the parent and change the state to \texttt{next}. We stop when reaching the root with state \texttt{up}.
By keeping track of the number of vertices discovered, we can use the same technique to count the
number of vertices in a subtree. This can then be used to compute the child with maximum subtree
size and to enumerate children ordered by their subtree size.
If the input tree is not rooted, we may use the indices of the vertices to root the tree in a deterministic way.

\subsection{Graph width measures}

Let $G=(V,E)$ be a graph. A  tuple $(T,\{X_t\}_{t\in V(T)})$ is  a \emph{tree decomposition} for $G$ if $T$ is a tree, for $t \in V(T)$, $X_t \subseteq V$ is the \emph{bag} of $t$, for each edge $uv \in E$ there is a bag such that $\{u,v\} \subseteq X_t$, and for each $u \in V$, the bags containing $u$ form a nonempty subtree of $T$.
If $T$ is a path, this defines a \emph{path decomposition}.

The width of such a decomposition is $\max_{t \in V(T)} |X_t|-1$. The \emph{treewidth} (resp. \emph{pathwidth}) of $G$ is the minimum possible width of a tree decomposition (resp. path decomposition) of $G$.

A \emph{nice} path decomposition has empty bags on endpoints of the path and two consecutive bags differ by at most one vertex. Hence, either a vertex is \emph{introduced}, or a vertex is \emph{forgotten}.

Let $G$ be a graph,
let $T$ be a tree, and, for all $t\in V(T)$, let $X_t$ be a non-empty set so that $(X_t)_{t\in V(T)}$ partitions $V(G)$. The pair $(T,(X_t)_{t\in V(T)})$ is a \emph{tree-partition} of $G$ if, for every edge $vv'\in E(G)$, either $v$ and $v'$ are part of the same bag, or $v\in X_t$ and $v'\in X_{t'}$ for $tt'\in E(T)$. The width of the partition is $\max_{t\in V(T)}|X_t|$. The \textit{tree-partition-width} (also known as strong treewidth) of $G$ is
the minimum width of all tree-partitions of $G$. It was introduced by Seese \cite{Seese85} and can be characterised by forbidden topological minors \cite{DingO96}. Tree-partition-width is comparable to treewidth on graphs of  maximum degree $\Delta$ \cite{DingO95,Wood09}:
$\tw+1 \leq 2\tpw \leq O(\Delta\tw)$. However, it is incomparable to treedepth, pathwidth and treewidth for general graphs. 

The \emph{treedepth} of a graph is the minimum height of a forest $F$ with the property that every edge of $G$ connects a pair of nodes that have an ancestor-descendant relationship to each other in $F$.

\section{Warm up: first ideas and a simpler algorithm}
\label{sec:warmup}
\subsection{Storing colours via their position in the list}
\label{subsec:max_deg}
It is not too difficult to obtain a non-deterministic algorithm that uses $O(\log^2n)$ space.
\begin{proposition}
\label{prop:max_degree}
\textsc{List Colouring} can be solved non-deterministically
using $O(\log n \log \Delta)$ space on $n$-vertex trees of maximum degree $\Delta$.
\end{proposition}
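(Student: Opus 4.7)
My plan combines two ingredients. The first, signalled by the subsection title, is to encode each colour by its \emph{position} in the corresponding list, together with the observation that one may assume without loss of generality that $|L(v)| \le \Delta + 1$ for every vertex $v$. Indeed, truncating each oversize list to its first $\Delta+1$ elements preserves the answer: given any $L$-colouring, iteratively reassign every $v$ whose $c(v)$ lies outside the truncation to one of the strictly-more-than-$\deg(v)$ colours in the truncation that avoid the neighbours' current colours. The algorithm never actually materialises the truncated lists; it simply restricts its guess at $v$ to a position in $\{1,\dots,\Delta+1\}$, rejecting if that position exceeds $|L(v)|$. Each such guess then fits in $O(\log \Delta)$ bits.

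The second ingredient is that every $n$-vertex tree admits a nice path decomposition of width $O(\log n)$ which can be navigated in logarithmic space (for instance via the logspace Courcelle theorem of~\cite{ElberfeldJT10} applied to the MSO-definable property of having pathwidth at most $k$, or directly by a recursive heavy-path style construction supported by the tree primitives of Section~\ref{sec:prelim}). I would then run the textbook non-deterministic dynamic programme on this decomposition, representing the state at the current bag by an array of at most $O(\log n)$ slots each holding one colour-position of $O(\log \Delta)$ bits together with a free/occupied flag. When a vertex $v$ is introduced we guess its colour-position non-deterministically, place it in a free slot, and verify that it differs from the colour-positions currently stored for those neighbours of $v$ in $T$ that lie in the bag (the slot of such a neighbour being found by a brief re-simulation of the decomposition when needed); when $v$ is forgotten we free its slot. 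This checks every edge at the first bag containing both of its endpoints.

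The space analysis is then immediate: the slot array costs $O(\log n \cdot \log \Delta)$ bits, and the bookkeeping for navigating the decomposition together with the occasional vertex-to-slot lookups contributes another $O(\log n)$ bits, for a total of $O(\log n \log \Delta)$. The main technical point I foresee is arranging the logspace navigation of the decomposition and the vertex-to-slot lookups \emph{without} storing the $O(\log n)$ vertex indices of the bag explicitly (which would already cost $O(\log^2 n)$ and ruin the bound for small $\Delta$); this is handled by recomputing the relevant portion of the decomposition on the fly from $T$ using only the tree primitives mentioned above. Everything else is a routine translation of the standard bag-based dynamic programme into the guess-and-verify framework.
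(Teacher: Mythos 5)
Your proposal follows essentially the same route as the paper: store colour positions rather than colours (truncating to the first $\Delta+1$ list entries), build a nice path decomposition of width $O(\log n)$ via the ``explore the heaviest subtree last'' stacking idea, and run a guess-and-verify dynamic programme over it while recomputing bag contents on the fly instead of storing vertex indices. One small caveat: your parenthetical suggestion to obtain the path decomposition via the logspace Courcelle/Bodlaender theorem of~\cite{ElberfeldJT10} does not actually work here, because that result applies to graphs of \emph{bounded} width, whereas the pathwidth of an $n$-vertex tree is $\Theta(\log n)$ --- the paper makes exactly this remark after Lemma~\ref{lem:pathwidth}; the direct heavy-path construction you also mention is the one that is needed.
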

The proposition follows from the following two lemmas, which are proved in Appendix \ref{app:a}.
\begin{lemma}
\label{lem:pathwidth}
\textsc{List Colouring} can be solved non-deterministically using $O(k\log \Delta +\log n)$ space for an $n$-vertex graph $G$ of maximum degree $\Delta$ if we can deterministically compute a path decomposition for $G$ of width $k$ in $O(\log n)$ space.  
\end{lemma}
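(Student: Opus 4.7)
The plan is to run a standard left-to-right dynamic programming along the path decomposition, using nondeterminism to guess the colour of each vertex when it is introduced and then verifying the edge constraints against already-introduced vertices in the same bag. The obvious obstacle is that an arbitrary colour identifier takes up to $\log n$ bits, so naively storing one colour per vertex of a bag would cost $(k+1)\log n$ bits, exceeding the budget.

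The key observation I would establish first is that the instance has a list colouring if and only if it has one in which every vertex $v$ uses one of the first $\Delta+1$ colours of $L(v)$. To justify this, I would give a simple exchange argument: if in some valid colouring $c$ a vertex $v$ uses a colour at position $>\Delta+1$ in its list, then the first $\Delta+1$ entries of $L(v)$ are $\Delta+1$ distinct colours while $v$ has at most $\Delta$ neighbours, so at least one of these entries is avoided by every neighbour of $v$ and can safely replace $c(v)$. This recolouring only affects $v$ and strictly decreases the number of offending vertices, so iterating terminates.

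Given this reduction, I would store colours \emph{by position in the list}, using only $\lceil\log(\Delta+1)\rceil$ bits per vertex. The work tape would hold (i) the index $t$ of the current bag in $O(\log n)$ bits, and (ii) the sequence of positions $(p_v)_{v\in X_t}$, ordered by vertex id so that the correspondence $v\leftrightarrow p_v$ can be recovered, taking $O(k\log\Delta)$ bits. Crucially, the bag contents themselves are never stored: whenever I need to know which vertices lie in $X_t$, to locate the slot of a vertex in the ordered bag, to enumerate the bag-neighbours of a vertex, or to identify which vertex is introduced or forgotten at a transition, I would recompute on the fly by invoking the assumed $O(\log n)$-space path-decomposition subroutine, reusing $O(\log n)$ auxiliary tape. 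An introduction step nondeterministically guesses $p_v\in\{1,\dots,\min(\Delta+1,|L(v)|)\}$, inserts it at the correct slot of the ordered sequence, and, for every $u\in X_t$ with $uv\in E(G)$, calls the logspace primitive comparing the $p_v$-th colour of $L(v)$ with the $p_u$-th colour of $L(u)$, rejecting on equality. A forgetting step deletes $p_v$; a non-nice decomposition is handled by doing these operations one vertex at a time per transition.

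The main obstacle I expect is simply packaging the exchange argument cleanly and then verifying that all bookkeeping (locating a vertex within the ordered bag, enumerating bag-neighbours, and repeated recomputation of bag contents) really fits inside $O(\log n)$ extra space, which is standard once the position-in-list encoding is in place. Correctness in the yes-direction follows from the exchange argument, since a reduced colouring is a possible nondeterministic guess, and in the no-direction from the fact that every edge $uv$ eventually sits in a common bag and is checked when its later endpoint is introduced. Summing the two contributions gives total work-tape usage $O(k\log\Delta+\log n)$, as required.
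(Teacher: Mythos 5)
Your proposal is correct and takes essentially the same approach as the paper: both store, per bag vertex, the position of its colour among the first $\Delta+1$ list entries, guess these positions nondeterministically along a nice path decomposition, and recompute bag contents on the fly in $O(\log n)$ space. The only superficial differences are that the paper checks edge constraints when a vertex is forgotten rather than when it is introduced, and that you make the underlying exchange argument explicit where the paper only alludes to it.
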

A deterministic logspace algorithm for computing an optimal path decomposition exists for all graphs of bounded pathwidth \cite{pathwidthpaper}, but this does not apply directly to trees (since their pathwidth may grow with $n$). 
\begin{lemma}
\label{lem:pd}
    If $T$ is an $n$-vertex tree, we can deterministically construct a nice path decomposition of width $O(\log n)$ using
    $O(\log n)$ space.
\end{lemma}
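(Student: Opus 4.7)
The plan is to apply the classical centroid decomposition recursively. Given a subtree $T'$, find a centroid $c$ (a vertex whose removal leaves components of size at most $|T'|/2$), recursively construct nice path decompositions of the components of $T'-c$, concatenate them in a fixed order with $c$ added to every bag, and prepend/append the operations that introduce and forget $c$ together with an empty bag at each endpoint. Each recursive call increases the width by exactly one and at least halves the subtree size, so the recursion has depth $\lceil \log_2 n \rceil$ and I obtain a nice path decomposition of width $O(\log n)$ with $2n+1$ bags.

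The issue is carrying this out in $O(\log n)$ space, since a naive implementation that stores the ancestor chain of centroids uses $\Theta(\log^2 n)$ bits. I would enumerate bags by traversing the centroid tree $C$ of $T$ in Euler-tour order, keeping on the work tape only the current vertex $c$ of $C$ together with $O(1)$ bits of traversal state in $\{\texttt{down}, \texttt{next}, \texttt{up}\}$, exactly as in the logspace tree traversal from the preliminaries. The bag at each position is the chain of ancestors of $c$ in $C$, which I would write to the output tape one vertex at a time by walking up $C$ from $c$, never storing more than a single ancestor at once.

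To implement the $C$-navigation primitives (first child, next sibling, parent, and the ``walk up $C$'' just mentioned) in logspace, I need, given $c$, to traverse the sub-instance of $T$ associated with $c$, namely the connected component of $T$ minus the $C$-ancestors of $c$ that contains $c$. Those $C$-ancestors are uniquely determined by $T$ and $c$, so recognising ``is $u$ a $C$-ancestor of $c$?'' reduces to a deterministic query that can be answered by recomputing centroids at strictly shallower depths. The hard part will be serialising these recognition queries onto a single scratch area of the work tape---reusing $O(\log n)$ bits at each depth rather than stacking them---so that the total auxiliary space stays $O(\log n)$. Since each nested query is deterministic and needs no return information beyond the current witness vertex, this recomputation-on-demand strategy should close the gap between the naive $O(\log^2 n)$ bound and the claimed $O(\log n)$ bound.
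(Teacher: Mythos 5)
Your centroid-decomposition plan does produce a path decomposition of width $O(\log n)$, and you correctly identify the crux: navigating the centroid tree $C$ so as to enumerate the $C$-ancestors of the current vertex $c$ in only $O(\log n)$ work space. But the ``recomputation-on-demand'' step you sketch does not close this gap, and I do not see how it can. Let $a_0,a_1,\dots$ denote the $C$-ancestors of $c$ from the root down, and let $T^{(i)}$ be the sub-instance at depth $i$, so that $T^{(i)}$ is the set of vertices $u$ whose path to $c$ in $T$ avoids \emph{all} of $a_0,\dots,a_{i-1}$, and $a_i$ is the centroid of $T^{(i)}$. Computing $a_i$ requires membership queries for $T^{(i)}$, and those queries need every one of $a_0,\dots,a_{i-1}$ simultaneously; you cannot replace this by knowing only $a_{i-1}$, since the component of $T-a_{i-1}$ containing $c$ can be strictly larger than $T^{(i)}$. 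So the fact that each centroid query ``needs no return information'' does not help: the ancestors are not a recursion stack that can be unwound, they are $\Theta(\log n)$ independent pieces of state, each of $\Theta(\log n)$ bits, that must all be live at once. It is also not the case that the $a_i$ lie along the tree-path from $a_0$ to $c$ (a star hanging off the path gives a counterexample), so you cannot rediscover them by a single walk in $T$. Absent some extra structural trick, your approach inherits the naive $\Theta(\log^2 n)$ bound.

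The paper's proof sidesteps this entirely by not using the centroid tree. It performs a DFS of $T$ in which, at every vertex, the child with the largest subtree is visited \emph{last}; the bag at every moment is the set of ancestors of the current DFS vertex that still have an unvisited child. The recursion $s(n)\le\max\bigl(s(n-1),\,1+s(n/2-1)\bigr)$ bounds the bag size by $\log n$. Crucially, this approach admits a purely local membership test: an ancestor $a$ belongs to the current bag if and only if the DFS entered $a$'s subtree towards $c$ through a non-largest child of $a$, and that test needs only $a$, its child on the $a$--$c$ path, and a subtree-size comparison, all recomputable in $O(\log n)$ reusable scratch. So the bag can be streamed to the output tape ancestor by ancestor with $O(\log n)$ total work space, with no stack of centroids to remember. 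That local, per-ancestor test is exactly what your centroid chain lacks.
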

We remark that $\Delta$ may be replaced by a bound on the list sizes in Proposition \ref{prop:max_degree} and Lemma \ref{lem:pathwidth}. The main observation in the proof of Lemma \ref{lem:pathwidth} is that for a vertex $v$, we only need to consider the first $d(v)+1$ colours from its list so that we can store the \emph{position} of the colour rather than the colour itself. 
Note that we cannot keep the path decomposition in memory, but rather recompute it whenever any information is needed. We keep in memory only the current position in the path decomposition and the list positions of the colours we assigned for vertices in the previous bag.

\subsection{Heavy children, recursive analysis and criticality}
Suppose that we are given an instance $(T,L)$ of \textsc{List Colouring}. We fix a root $v^*$ of $T$ in an arbitrary but deterministic fashion, for example the first vertex in the natural order on the vertices. 
Let $v\in V(T)$. We see $v$ as a descendant and ancestor of itself.
We write $T_{v}$ for the subtree with root $v$. 
\begin{definition}[Heavy]
A child $u$ of a vertex $v$ in a rooted tree $T$ is called \textit{heavy} if $|V(T_u)|\geq |V(T_{u'})|$ for all children $u'$ of $v$, with strict inequality whenever $u'<u$ in the natural order on $V$.
\end{definition}
Each vertex has at most one heavy child. We also record the following nice property.
\begin{observation}
\label{obs}
If $u$ is a child of $v$ which is non-heavy, then $|V(T_u)|\leq (|V(T_v)|-1)/2$.
\end{observation}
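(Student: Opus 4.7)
The plan is to unpack the definition of \emph{heavy} and produce a sibling of $u$ whose subtree is at least as large as $T_u$, then count vertices.

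First I would observe: since $u$ is non-heavy, the definition fails in at least one way. Either (i) some child $u'$ of $v$ has $|V(T_{u'})| > |V(T_u)|$, or (ii) every child $u'$ satisfies $|V(T_{u'})| \leq |V(T_u)|$, but the strict inequality for smaller-indexed children fails, i.e., there is some $u' < u$ (in the natural vertex order) with $|V(T_{u'})| \geq |V(T_u)|$. In both cases one extracts a child $u' \neq u$ of $v$ with $|V(T_{u'})| \geq |V(T_u)|$. This is the only content of the definition that matters here; the tie-breaking via indices is just to ensure uniqueness of the heavy child.

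Second, I would count. The subtrees $T_u$ and $T_{u'}$ are vertex-disjoint (they are rooted at distinct children of $v$), and neither contains $v$. Hence
\[
|V(T_v)| \;\geq\; 1 + |V(T_u)| + |V(T_{u'})| \;\geq\; 1 + 2|V(T_u)|,
\]
which rearranges to $|V(T_u)| \leq (|V(T_v)|-1)/2$, as desired.

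There is no real obstacle here; the only subtlety is making sure the non-heavy condition is parsed correctly, in particular that the strict-inequality clause for smaller siblings is handled, so that a witnessing sibling $u'$ with $|V(T_{u'})| \geq |V(T_u)|$ always exists. Once that witness is in hand, the disjointness of $T_u$ and $T_{u'}$ inside $T_v$ closes the argument immediately.
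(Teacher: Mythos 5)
Your proof is correct and is the natural argument: the paper states this as an unproved observation, and the intended reasoning is exactly what you wrote — a non-heavy child always has a sibling with at least as large a subtree, and the two disjoint subtrees plus $v$ give $|V(T_v)|\geq 1+2|V(T_u)|$. Your careful unpacking of the two ways a child can fail to be heavy (a strictly larger sibling, or an earlier sibling of equal or larger size) is the only point requiring any attention, and you handle it correctly.
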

An obvious recursive approach is to loop over the possible colour $c\in L(r)$ for the root $r$ and then to recursively check for all children $v$ of $r$ whether a list colouring can be extended to the subtree $T_{v}$ (while not giving $v$ the colour $c$). We wish to prove a space upper bound of the form $S(n)=f(n)\log n$ on the number of bits of storage required for trees on $n$ vertices (for some non-decreasing function $f$). We compute
\begin{equation}
\label{eq:rec}
S(n/2)= \log (n/2)f(n/2)\leq \log(n/2)f(n)=\log nf(n)-\log 2f(n)\leq S(n)-f(n).
\end{equation}
This shows that while performing a recursive call on some subtree $T_v$ with $|V(T_v)|= n/2$,  we may keep an additional $f(n)$ bits in memory (on top of the space required in the recursive call). In particular, we can store the colour $c$ using $O(\log n)$ bits when $f(n) = \Theta( \log n)$, but can only keep a \emph{constant} number of bits for such recursions when proving Theorem \ref{thm:main}. 

We next explain how we can ensure that we only need to consider recursions done on non-heavy children.
Suppose $v$ has non-heavy children $v_1,\dots,v_k$ and heavy child $u$. We will write $G_v=T_v-T_u$.
Suppose the parent $v'$ of $v$ needs to be assigned colour $c'$. One of the following must be true.
\begin{itemize}
    \item There is no colouring of $G_v$ which avoids colour $c'$ for $v$. In this case, we can reject.
    \item There is a unique colour $c\neq c'$ that can be assigned to $v$ in a list colouring of $G_v$. We say $v$ is \emph{critical} and places the colour constraint on $u$ that it cannot receive colour $c$.
    \item There are two possible colours unequal to $c'$ that we can give $v$ in $G_v$. We then say $v$ is \textit{non-critical}: it can be coloured for each colour we might wish to assign its heavy child $u$. 
\end{itemize}
One example of non-criticality is if the list of available colours $|L(v)|\geq k+2$. In this case, $T_v$ is list colourable if and only if $T_{v_1},\dots,T_{v_k},T_u$ are all list colourable (using lists $L$).
\subsection{A deterministic algorithm using $O(\log^2n)$ space and polynomial time}
\label{subsec:warm-up}
We define a procedure \textsc{solve}$(v,p)$ which given a vertex $v$ and number $p\in \{0,1,\dots,n\}$, determines whether the subtree $T_v$ rooted on $v$ can be list coloured; for $p\geq 1$ there is an additional constraint that  $v$ cannot receive the $p$th colour in $L(v)$; for $p=0$ the vertex $v$ may receive any colour. 

Suppose we call \textsc{solve}($r,p$) for some $r\in V(T)$. Let $v_1,\dots,v_k$ denote the non-heavy children of $r$ and $u$ the heavy child.  
The algorithm works as follows.
\begin{enumerate}
    \item For $i=1,\dots,k$, we recursively verify that $T_{v_i}$ can be list coloured (\textsc{solve}$(v_i,0)$); we reject if any of these rejects.
    \item  If $|L(r)|\geq d(r)+1=k+2$, then we are `non-critical': we free up our memory (removing also $(r,p)$) and recursively verify whether $T_{u}$ can be list coloured by calling \textsc{solve}$(u,0)$.
    \item From now on, $|L(r)|\leq k+1$. We check that there is some $p_1\in \{1,\dots,|L(r)|\}$ for which we can assign $v$ the $p_1$th colour in its list, and extend to a list colouring of $T\setminus T_u$. This involves recursive calls \textsc{solve}$(v_i,p_{i,1})$ where $p_{i,1}$ places the appropriate constraint\footnote{Let $c_1$ be the $p_1$th colour in $L(r)$. If $c_1\not\in L(v_i)$, let $p_{i,1}=0$. Otherwise, let $p_{i,1}$ be the position of $c_1$ in $L(v_i)$.} on $v_i$.\\
If no such $p_1$ exists, we reject.
\item Next, we check whether $r$ is `non-critical', that is, whether there is some $p_2\neq p_1$ for which there is a list colouring of $T\setminus T_u$ in which $r$ receives the $p_2$th colour from its list.\\
If such $p_2$ exists, we free up our memory and recursively verify whether $T_{u}$ can be list coloured.\\
If $p_2$ does not exist, then we know that $r$ must get colour $p_1$. If $p_1=p$, we reject. \\
Otherwise, we free up our memory and run \textsc{solve}($u,p_1'$), where $p_1'$ is either $0$ or the position in the list of $u$ of the $p_1$th colour of $L(r)$.
\end{enumerate}
We give a brief sketch of the space complexity; more precise arguments including also pseudocode and the time analysis of this algorithm are given in Appendix \ref{app:b}. 

For the sake of analysis, suppose we keep a counter $\r$ that keeps track of the `recursion depth'. We increase this by one each time we do a call on a non-heavy child (decreasing it again once it finishes), but do not adjust it for calls on a heavy child.

Suppose $\r$ has been increased due to a sequence of recursive calls on $(v_1,0),(v_2,p_2),\dots,(v_\ell,p_\ell)$, with $v_1$ the root of the tree and $v_{i+1}$ a non-heavy child of some heavy descendant of $v_{i}$ for all $i\in[\ell-1]$. Then
$|V(T_{v_\ell})|\leq \frac12|V(T_{v_{\ell-1}})|\leq \dots \leq \frac1{2^{\ell-1}} |V(T)|$. In particular, $\r$ is always bounded by $\log n$.

Crucially, if a call  \textsc{solve}($u,p'$) is made on the heavy descendant $u$ of some $v_i$, the only information we need to store relating to the part of the tree `between $v_i$ and $u$' is $p'$. Therefore, if we distribute our work tape into $\lceil \log n\rceil$ parts where the $i$th part will be used whenever $\r$ takes the value $i$, then each part only needs to use $10 \log n$ bits, giving a total space complexity of $O(\log^2 n)$. 

\section{Proof of Theorem \ref{thm:main}}
\label{sec:proof}
In this section, we describe our $O(\log n)$ space algorithm. This also uses the ideas of using positions in a list (rather than the colours themselves), criticality and starting with the non-heavy children described in the previous section. However, we need to take the idea of first processing `less heavy' children even further. 

The main idea is to store the colour $c$ that we are trying for a vertex $v$ using the position $p_j$ of $c$ in some list $L_j(v)$, and to reduce the size of the list (and therefore the storage requirement of $p_j$) before we process `heavier' children of $v$. There are two key elements:
\begin{itemize}
    \item We can recompute $c$ from $p_j$ in $O(\log n)$ space. This is useful, since we can do a recursive call while only having $p_j$ (rather than $c$) as overhead, discover some information, and then recompute $c$ only at a point where we have a lot of memory available to us again.
    \item The space used for $p_j$ will depend on the size of the tree that we process. It is too expensive to consider all sizes separately, and therefore we will `bracket' the sizes. Subtrees whose size falls within the same bracket are processed in arbitrary order. For example, we put all trees of size $O(\sqrt{n})$ in a single bracket: these can be processed while using $O(\log n)$ bits of information about $c$ (which is trivially possible). 
    At some point we reach brackets for which the subtrees have size linear in $n$, say of size at least $\tfrac14 n$. Then, we may only keep a \emph{constant} number of bits of information about $c$. Intuitively, this is possible because at most four children of $r$ can have a subtree of size $\geq \tfrac14 n$, so the `remaining degree' of $v$ is small. In particular, if more than six colours for $v$ work for all smaller subtrees, then $v$ is `non-critical'.
\end{itemize}
We first explain our brackets in Section \ref{subsec:brack}. We then explain how we may point to a colour using less memory in Section \ref{subsec:pj} and how we keep track of vertices using less memory in Section \ref{subsec:pos}. We then sketch the proof of Theorem \ref{thm:main} by outlining the algorithm and its space analysis in Section \ref{subsec:alg}.
\subsection{Brackets}
\label{subsec:brack}
Recall that we fixed a root for $T$ in an arbitrary but deterministic fashion. 

Let $v\in V(T)$ and $u$ the heavy child of $v$. Let $G_v=T_v-T_u$ and $n_v=|V(T_v)|$. 
Each subtree $T'$ of $G_v-v$ is rooted in a non-heavy child of $v$ and will be associated to a bracket based on $|V(T')|$. By Observation \ref{obs}, $1\leq |V(T')| \leq (n_v-1)/2$.
Let $\M_v=\lceil \log\log(n_v/2)\rceil$. 
The brackets are given by the sets of integers in the intervals
\[
[1,n_v/2^{2^{\M_v-1}}), [n_v/2^{2^{\M_v-1}},n_v/2^{2^{\M_v-2}}), \dots,[n_v/256,n_v/16), [n_v/16,n_v/4), [n_v/4,n_v/2).
\]

There are $\M_v$ brackets: $[n_v/2^{2^{j}},n_v/2^{2^{j-1}})$ is the $j$th bracket for $j\in \{1,\dots,
\M_v-1\}$ and $[1,n_v/2^{2^{\M_v-1}})$ is the $\M_v$th bracket. Note that
$n_v/2^{2^{\M_v-1}}=O(\sqrt{n_v})$. This implies that while doing a recursive call on a tree in the $\M_v$th bracket, we are happy to keep an additional $O(\log n_v)$ bits in memory. 

We aim to show that for some universal constant $C$, when doing a recursive call on a tree in the $j$th bracket, we can save all counters relevant to the current call using at most $C2^{j}$ bits (which depending on the value of $j$, could be $\Omega(\log n)$). In our analysis, we save the counters in a new read-only part of the work tape. The recursive call cannot alter this (and will have to work with less space on the work space). We can then use our saved state to continue with our calculations once the recursive call finishes.

We short-cut  $\M=\M_v$ for legibility; the dependence of $\M$ on $v$ is only needed to ensure that we do not start storing counters for lots of empty brackets when $n_v$ is much smaller than $n$, and can be mostly ignored.

\subsection{The information $p_{j}$ stored about a colour $c$}
\label{subsec:pj}
Let $v\in V(T)$ with heavy child $u$ and $c\in L(v)$. Recall that $G_v=T_v-T_u$. 
We will loop over $j=\M,\dots,0$ and consider subtrees of $G_v$ whose size falls in the $j$th bracket in an arbitrary, but deterministic order (e.g. using the natural order on their roots). When $j$ decreases, we will perform recursions on larger subtrees of $G_v-v$ and can therefore keep less information about $c$. 
We first define `implicit' lists. \begin{itemize}
    \item Set $L_{\M}(v)=L(v)$. 
    \item For $j\in [0,\M-1]$, let $L_j(v)$ be the set of colours $\alpha\in L(v)$ such that all subtrees $T'$ of $G_v-v$ with $|V(T')|<n/2^{2^j}$ can be coloured without giving the colour $\alpha$ to the root of $T'$ (so that $v$ may receive colour $\alpha$ according to those subtrees).
\end{itemize}
Note that $L_j(v)=L(v)$ if there are no subtrees $T'$  with $|V(T')|<n/2^{2^j}$. Since the subtrees associated to brackets $1,\dots, j$ have size at least $n/2^{2^j}$, there can be at most $2^{2^j}$ of them. 
If $|L_j(v)|\geq 2^{2^j}+3$ and all subtrees $T'$ of $G_v-v$ can be coloured, then $v$ is `non-critical': after the parent and heavy child of $v$ have been coloured, the colouring can always be extended to $v$ and the rest of $G_v$. 

Suppose we are testing if we can give colour $c$ to $v$. If $c\not\in L_j(v)$, then we may reject: $c$ is not a good colour for one of the subtrees. We define $p_{j}=p_j(c,v)$ as the integer $x\in \{1,\dots,|L_j(v)|\}$ such that the $x$th element in $L_j(v)$ equals $c$. In particular, $p_{\M}$ is the position of the colour $c$ in the list $L_{\M}(v)=L(v)$ and $p_0$ gives the position of $c$ in the list of colours for which all subtrees of $G_v$ allow $v$ to receive $c$. For $j<\M$, we will reserve at least $\log(2^{2^j}+3)$ bits for $p_{j}$. This is possible, because we can maintain that $|L_j(v)| \leq 2^{2^j}+2$ by going into a `non-critical subroutine' if we discover the list is larger.

\subsection{Position of the current vertex}
\label{subsec:pos}
Next, we describe how to obtain efficient descriptions of the vertices in the tree. When performing recursions, we find it convenient to store information using which we can retrieve the `current vertex' of the parent call. Therefore, we require small descriptions for such vertices if the call did not make much `progress'.

For any $v\in V(T)$, define a sequence $h(v,1),h(v,2),\dots$ of heavy descendants as follows. Let $h(v,1)=v$.
Having defined $h(v,i)$ for some $i\geq 1$, if this is not a leaf, we let $h(v,i+1)$ be the heavy child of $h(v,i)$. Note that given the vertex $h(v,i)$, we can find the vertex $h(v,i+1)$ (or conclude it does not exist) in $O(\log n)$ space. 
We give a deterministic way of determining a bit string $\poss(v,i)$ that represents $h(v,i)$, where the size of the bit string will depend on the `progress' made at the vertex $h(v,i)$ that it represents. 
This `progress' is measured by the size $t_i$ of the largest subtree $T'$ of a non-heavy child of $h(v,i)$, that is, $T'$ is the largest component of $T-h(v,i)$ which does not contain $h(v,j)$ for $j\neq i$.
We define $\poss(v,i)$ as follows. 
\begin{itemize}
    \item Let $j$ be given such that $t_i\in [n_v/2^{2^{j}},n_v/2^{2^{j-1}})$. Start $\poss(v,i)$ with $j$ zeros, followed by a $1$.
    \item There are at most $2^{2^{j}}$ values of $i$ for which $t_i\geq 2^{2^{j}}$. We add a bit string of length $2^j$ to $\poss(v,i)$, e.g. the value $x$ for which $h(v,i)$ is the $x$th among $h(v,1),\dots,h(v,a)$ with $t_i\in [n_v/2^{2^{j}},n_v/2^{2^{j-1}})$.
\end{itemize}
Note that, given $v$, we can compute $\poss(v,i)$ from $h(v,i)$ and $h(v,i)$ from $\poss(v,i)$ using $O(\log n)$ space. 
If we do a recursive call, it will be on a non-heavy child $u$ of some $h(v,i)$. By definition, $|V(T_u)|\leq t_i$ and $\poss(v,i)$ depends on $t_i$ in a way that we are able to keep it in memory while doing the recursive call.

We use the encoding $(\poss(v^*,i),j,\ell)$ for the $\ell$th child $u$ of $h(v^*,i)$ whose subtree has a size that falls in the $j$th bracket. We can also attach another such encoding, e.g.
\[
((\poss(v^*,i),j,\ell),(\poss(u,i'),j',\ell')),
\]
to keep track of $u$ and the $\ell'$th child $v$ of $h(u,i')$ whose subtree has a size that falls in the $j'$th bracket. We can retrieve $v$ from the encoding above in $O(\log n)$ space and therefore can retrieve it whenever we have such space available to us.

\subsection{Description of the algorithm}
\label{subsec:alg}
During the algorithm, the work tape will always start with the following.
\begin{itemize}
    \item $\r$: the recursion depth $r$ written in unary. At the start, $r=0$.
    \item $\pos=\pos_0|\cdots |\pos_r$: encodes vertices as described Section \ref{subsec:pos}. At the start, this is empty and points at the root $v^*$ of the tree on the input tape. 
    \item $\colres=\colres_0|\dots|\colres_r$: encodes colour restriction information for the vertices encoded by $\pos$. At the start, this is empty and no restrictions are given. We maintain throughout the algorithm that $\colres_i$ gives us colour restrictions for the vertex $v$ pointed at by $\pos_i$. The restriction can either be `no restrictions' or `avoid $c'$'; in the latter case $\colres_i$ contains a tuple $(j,p_j')$ with $p_j'$ the position of $c'$ in $L_j(v')$, where $v'$ is the parent of $v$.
    \item $\auxi=\auxi_0|\dots|\auxi_r$: further auxiliary information for parent calls that may not be overwritten.
\end{itemize}
We define a procedure $\texttt{process}$. While the value of $\r$ equals $r$, the bits allocated to $\pos_i,\colres_i,\auxi_i$ for $i<r$ will be seen as part of the read-only input-tape.   In particular, the algorithm will not make any changes to $\pos_i,\colres_i$ or $\auxi_i$ for any $i<r$, but may change the values for $i=r$.

We will only increase $\r$ when we do a recursive call. If during the run of the algorithm $\r=r$ and a recursive call is placed, then we increase $\r$ to $r+1$ and return to the start of our instructions. However, since $\r$ has increased it will now see a `different input tape'.  When the call finished, we will decrease $\r$ back to $r$ and wipe everything from the work space except for $\r,\pos_i,\colres_i,\auxi_i$ for $i\leq r$, and the answer of the recursive call ($0$ or $1$). We then use $\auxi_i$ to reset our work space and continue our calculations.

We will ensure that $\r$ is always upper bounded by $\log n$. Indeed, the vertex $v_r$ encoded by $\pos_r$ will always be a non-heavy child of a descendant of the vertex encoded by $\pos_{r-1}$. 

While $\r=r$, the algorithm is currently doing calculations to determine whether the vertex $v$ pointed at by $\pos_{r-1}$ ($v^*$ for $r=0$) has the property that $T_{v}$ can be list coloured, while respecting the colour restrictions encoded by $\colres_{r-1}$ (none if $r=0$). 
Recall that rather than writing down $v$ explicitly, we use a special encoding from which we can recompute $v$ whenever we have $C\log n$ space available on the work tape (for some universal constant $C$). 
Similarly, $\colres_{r-1}$ may give a position $p_j'$ in $L_j(v')$ for some $j<\M$ (for $v'$ the parent of $v$), and we may need to use our current work tape to recompute the corresponding position $p_\M'$ of the colour in $L(v')$, so that we can access the colour from the input tape.
This part will make the whole analysis significantly more technical. 

We define an algorithm which we call $\texttt{process}$ as follows. A detailed outline is given in Appendix \ref{app:c}, whereas an informal description of the steps is given below.
\begin{enumerate}
    \item[0.] Let $v$ be the vertex pointed at by $\pos_\r$. We maintain that at most one colour $c'$ has been encoded that $v$ must avoid. Handle the case in which $v$ is a leaf. If not, it has some heavy child $h$. We go to 1, which will eventually lead us to one of the following (recall that $G_v=T_v-T_h$):
    \begin{enumerate}
    \item[(rej)] There is no list colouring of $G_v$ avoiding $c'$ for $v$. We return \texttt{false}.
    \item[(nc)] The vertex $v$ can get two colours (unequal to $c'$) in list colourings of $G_v$. In this case, we say $v$ is non-critical. We update $\pos_\r$ to $h$, set $\colres_\r$ to `none' and repeat from 0.
    \item[(cr)] There is a unique colour $c\neq c'$ that works. Then $\{c\}\subseteq L_0(v)\subseteq \{c,c'\}$ and so can represent $p_0=p_0(c,v)$ with a single bit. We update $\pos_\r$ to $h$, update $\colres_\r$ to $(0,p_0)$ and repeat from 0.
\end{enumerate}
    \item We check that all subtrees can be coloured if we do not have any colour restrictions (necessary for the non-critical subroutine). This involves recursive calls on $\process$ where we have no colour constraint on the root of the subtree.
    \item We verify that $L_0(v)$ is non-empty. We iteratively try to compute $p_\M$ from $p_{0}=1$ via $p_{1},\dots,p_{\M-1}$. Starting from $p_0=1$ and $j=0$, we compute $p_{j+1}$ from $p_{j}$ as follows.
\begin{enumerate}[(i)]
    \item Initialise $\texttt{curr}_j=1$. This represents a position in $L_j(v)$ (giving the number of `successes').\\
Initialise $\texttt{prev}_j=1$. This represents a position in $L_{j+1}(v)$ (giving the number of `tries').  
    \item We check whether the $\texttt{prev}_j$th colour of $L_{j+1}(v)$ works for the trees in the $j$th bracket. This involves a recursive call on $\process$ for each tree $T'$ in the $j$th bracket, putting $(j+1,\texttt{prev}_j)$ as the colour constraint for the root of $T'$. (The colour restriction gives a position in $L_{j+1}(v)$; we do not store the corresponding colour and rather will recompute it in the recursive call!)
    \item If one of those runs fails, we increase $\texttt{prev}_j$; if this is now $>2^{2^{j+1}}+3$, then this implies a lower bound on $|L_{j+1}(v)|$ which allows us to move to (nc). \\
    If $\texttt{curr}_j<p_{j}$
    we increase both $\texttt{curr}_j$ and $\texttt{prev}_j$. 
    
    Once $\texttt{curr}_j=p_{j}$, we have succesfully computed $p_{j+1}=\texttt{prev}_j$ and continue to compute $p_{j+2}$ if $j+1<\M$. Otherwise we repeat from (ii).
\end{enumerate}
Once we have $p_\M$, we can access the corresponding colour from the input tape by looking at the $p_\M$th colour of $L(v)$. If $p_\M>|L(v)|$, we go to (rej).
\item We verified $|L_0(v)|\geq 1$. We establish whether $|L_0(v)|\geq 3$ in a similar manner. If so, we can go to (nc); else we need to start considering the colour constraints of the parent $v'$ of $v$. Note that we can use $\auxi_\r$ to store auxiliaries such as $\alpha=|L_0(v)|\in \{1,2\}$.
    
It remains to explain how we check whether the first or second colour from $L_0(v)$ satisfies the colour constraint from $v'$. Suppose a colour $c'$ has been encoded via the position $p_{j'}'$ of $c'$ in $L_j(v')$ for some $j'\in [0,\M']$ (where $\M'=\M_{v'}$). 

We can recompute $p_{\M'}'$ from $p_j'$ in a similar manner to the above. However, once we store $p_{\M'}'$, we may no longer be able to compute $p_\M$ from $p_0=1$ or $p_0=2$, since $p_{\M'}'$ may take too much space\footnote{This is one of the issues that made this write-up more technical and involved than one might expect necessary at first sight; if we do a recursion on a child in the $j$th bracket of $v$ or $v'$, then we are only allowed to keep $O(2^{j})$ bits on top of the space used by the recursion. This means we cannot simply keep $p_{\M'}'$ in memory if $j$ is much smaller than $\M'$.}. Therefore, we first recompute $p_{j'}$ from $p_0$ and then simultaneously recompute $p_{j'+1}$ from $p_{j'}$ and $p'_{j'+1}$ from $p'_{j'}$ until we computed $p_\M$ and $p_{\M'}'$. We then check whether the $p_\M$th colour of $L(v)$ equals $c'$, the $p_{\M'}'$th colour of $L(v')$. (It may be that $\M \neq  \M'$, meaning that we may finish one before the other.)

The computation of $p_{x+1}'$ from $p_{x}'$ for the parent $v'$ of $v$ is a bit more complicated if $v$ is a non-heavy child of $v'$. In this case, $v$ is in the $(j'-1)$th bracket of $v'$. The algorithm calls again on itself for subtrees in the $x$th bracket of $v'$, but now we see a resulting call to \texttt{process} as a \emph{same-depth call} rather than a `recursive call'. The computations work the same way, but we do not adjust $\r$ and will \emph{add} the current state from before the call in $\texttt{aux}_\r$.
\end{enumerate}
By an exhaustive case analysis, the algorithm computes the right answer. It remains to argue that it terminates and runs in the correct space complexity.

We first further explain the same-depth calls on \texttt{process}. When we make such call, we do not adjust $\r$.  When the call is made, $\pos_\r$ will encode a vertex $v_1$ which is a non-heavy child of $v'$. After the call, $\pos_\r$ will point to some other non-heavy child $v_2$ of $v'$. Importantly, the bracket of $v_2$ will always be higher than the bracket of $v_1$, so that at most $\M$ such same-depth calls are made before changing our recursion level. Each same-depth call may stack on a number of auxiliary counters which we keep track of using $\auxi_{\r}$; since for each $j$, there is at most one vertex from the $j$th bracket which may append something to this, it suffices to ensure a vertex from bracket $j$ adds at most $C2^j$ bits. Indeed, this ensures that the size of $\auxi_r$ takes at most $C\sum_{j=1}^{j'}2^j\leq C2^{j'+1}=O(2^{j'})$ bits once $\pos_r$ encodes a vertex from bracket $j'$.

We now argue that it terminates. 
Each time we do a same depth call, the value of the bracket $j$ will increase by at least one. We therefore may only do a finite number of same-depth calls in a row. 
When $\r=0$, each time we reach (nc) or (cr), we move one step on the heavy path from the root to a leaf, so this will eventually terminate. A similar observation holds for $\r>0$ once we fix $\pos_1|\dots|\pos_{\r-1}$: this points to some vertex $v$ and $\pos_r$ will initially point to a non-heavy descendent $u$ or $v$, and then `travel down the heavy path' from $u$ to a leaf. 

We next consider the space used by the algorithm. 
Let $S(n)$ be the largest amount of bits used for storing
$\r,\pos,\colres$ and $\auxi$ during a run of \textsc{process} on an $n$-vertex tree. We can distribute our work space into two parts: $C_1\log n$ space for temporary counters and for doing calculations such as computing the heavy child of a vertex, and $S(n)$ bits for storing $\r,\pos,\colres$ and $\auxi$. 
It suffices to prove that $S(n)\leq C_2\log n$; this is the way we decided to formalise keeping track of the `overhead' caused by recursive calls.

Note that $\r$ is bounded by $\log n$ if the input tree has $n$ vertices: each time it increases, we moved to a non-heavy child whose subtree consists of at most $n/2$ vertices. 

Suppose we call \process with $\pos$ pointing at some vertex $v$ whose subtree has size $n_v$. We will show inductively that the number of bits used by $\pos,\colres$ and $\auxi$ is in $O(\log n_v)$ throughout this call (note that $n_v$ may be much smaller than $n$, the number of vertices of the tree on the input tape). Whenever we do a recursive call, this will be done on a tree whose size is upper bounded by $n_v/2^{2^{j-1}}$ for some $1\leq j\leq \M=\lceil \log\log(n_v/2)\rceil$. Since by induction the recursive call requires only
\[
S(n/2^{2^{j-1}})=C_2\log(n/2^{2^{j-1}})\leq S(n)-C_22^{j-1}
\]
additional bits, we will allow ourselves to add at most $C_22^{j-1}$ bits to $\r,\pos,\colres$ and $\auxi$ before we do a recursive call that divides the number of vertices by at least $2^{2^{j-1}}$. The constant $C_2$ will be relatively small (for example, 1000 works). 

Fix a value of $j$.
From the definitions in Section \ref{subsec:pj} and \ref{subsec:pos}, at the point that we do a recursion on a subtree whose size is upper bounded by $n_v/2^{2^{j-1}}$, $\pos_\r$ and $\colres_\r$ store at most three integers (they are of the form $(\poss(v',i),j,\ell)$ and $(x,p_x)$ respectively) that are bounded by $2^{2^j}+3$. Therefore, these require at most $C'2^{j-1}$ bits for some constant $C'$ (e.g. $C'=50$ works). 

The worst case comes from $\auxi_\r$ which may get stacked up on by the `same-depth calls'. There is at most one such same-depth call per $x\in \{1,\dots,j\}$. For such $x$, we add on a bounded number of counters (e.g. $\texttt{curr}_x$ and $\texttt{prev}_x$) which can take at most $2^{2^{x+1}}+3\leq 2\cdot 2^{2^{x+1}}$ values. Since $C\sum_{x=1}^j 2^x\leq 4C2^{j-1}$, $\auxi_\r$ also never contains more than $O(2^{j-1})$ bits while doing a call on a tree whose size falls in bracket $j$.

\section{Graphs of bounded tree-partition-width}
\label{sec:boundedtpw}

\subsection{Proof of Corollary \ref{cor:listcol_tpw}}

Here we sketch the proof of Corollary \ref{cor:listcol_tpw}.
Let $(T,(X_t)_{t\in V(T)})$ be a tree-partition of width $k$ for an $N$-vertex graph $G$, where it will be convenient to write $n=|V(T)|\leq N$. We prove that there is an algorithm running in $O(k\log k\log n)$ space, for each $k$ by induction on $n$. 

We root $T$ in the vertex of lowest index.
We define the recursion depth, heavy children, the brackets, $M$ and `position' for the vertices in $T$ exactly as we did in Section \ref{sec:proof}. 

Suppose $\pos$ points at some vertex $t\in V(T)$. We aim to use $O(k\log k\log(2^{2^j}))$ bits for the information $p_j$ stored about the colouring $c$ of the vertices in $X_t$ while processing subtrees in bracket $j$. This is done as follows. For $t\in V(T)$, let $G_{t,j}$ be the graph induced on the vertices that are either in $X_t$ or in $X_{s}$ for $s$ a non-heavy child of $t$ in bracket $j$ or above. 
For $v\in X_t$, we define $L_j(v)$ to be the list of colours $\alpha \in L(v)$ such that there is a list colouring of $G_{t,j}$ that assigns the colour $\alpha$ to $v$. There are at most $2^{2^j}$ subtrees of $T$ associated to brackets $1,\dots,j$, and so these include at most $k2^{2^j}$ neighbours of $v$. Therefore, if $|L_j(v)|\geq k(2^{2^j}+3)$, then $G$ can be list coloured if and only if $G-v$ can be list coloured and we no longer need to keep track of the colour of $v$. We then establish $v$ is non-critical. 
We use $k$ bits to write for each vertex of $X_t$ whether or not it has been established to be critical, and for those vertices that are critical, we use $\log(k(2^{2^j}+3))$ bits per vertex to index a colour from $L_j(v)$.

We use $k + \log((3k)^k)=O(k\log k)$ bits of information in $\auxi$ to keep track of the following.
\begin{itemize}
    \item For each $v\in X_t$, whether or not $v$ has been established to be critical. After processing $t$, the vertex $v\in X_t$ will be critical if there are at most $3k$ colours in $L(v)$ for which there exists an extension to $G_t$. Let $C_t\subseteq X_t$ denote the critical vertices.
    \item For each $p_0\in \prod_{v\in C_t}L_0(v)$ (of which there are at most $(3k)^k$), a single bit which indicates whether or not the parent of $t$ would allow the corresponding colouring.
\end{itemize}
While computing the information above, we still need the auxiliary information from the parent of $t$, but we can discard this by the point we start processing the heavy child of $t$.

We make two more small remarks:
\begin{itemize}
    \item We need to redefine what we mean by `increasing' $p_j$ for some $j$, since we now work with a tuple of list positions. We fix an arbitrary but deterministic way to do this, for example in lexicographical order using the natural orders on the vertices and colours.
    \item When we check whether the colouring $c$ of $X_t$ corresponding to $p_0\in \prod_{v\in C_t}L_0(v)$ is allowed by the parent $t'$ of $t$, we run over $p_0'\in \prod_{v'\in C_{t'}}L_0(v')$, and as before we need to compute $p_i,p_i'$ from $p_{i-1},p_{i-1}'$ iteratively until we obtain the colourings corresponding to $p_\M$ and $p_\M'$. If those colourings are compatible, then we know that there is a list colouring of the graph `above $t$' for which $X_t$ is coloured `according to $p_0$', and so we record in {\auxi} that $p_0$ is allowed. 
\end{itemize}
The calculations in the space analysis work in the exact same way: we simply multiply everything by $C k\log k$ (for a universal constant $C$).

\subsection{W[1]-hardness}
We give an easy reduction for the following result.
\begin{theorem}
    \textsc{List Colouring} parameterised by the width of a given tree-partition is \textup{W[1]}-hard.
\end{theorem}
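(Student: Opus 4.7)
The plan is to reduce from \textsc{Multicoloured Independent Set}, which is W[1]-hard parameterised by the number $k$ of colour classes. Given an instance $(H, V_1, \ldots, V_k)$ where $V_1, \ldots, V_k$ partition $V(H)$, I will produce in logspace a list-colouring instance $(G, L)$ together with an explicit tree-partition of $G$ of width $k$, such that $G$ is $L$-colourable if and only if $H$ admits an independent transversal, i.e.\ an independent set containing exactly one vertex from each $V_i$. Since the parameter $k$ is preserved, this yields a parameterised reduction.

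The construction places $k$ \emph{selector vertices} $v_1, \ldots, v_k$ into a single root bag and sets $L(v_i)=V_i$, identifying available colours with vertices of $H$. For every edge $e=ab \in E(H)$ with $a\in V_i$ and $b\in V_j$, I add a fresh \emph{edge vertex} $w_e$ forming its own singleton bag which is a child of the root, set $L(w_e)=\{a,b\}$, and insert the two edges $v_i w_e$ and $v_j w_e$ into $G$. Every edge of $G$ is then between the root bag and a child bag, so this is a valid tree-partition, whose width equals the size of the largest bag, namely $k$.

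For the equivalence, I would check both directions. Given an independent transversal $\{s_1,\ldots,s_k\}$ with $s_i\in V_i$, set $c(v_i)=s_i$; for each edge $e=ab$ with $a\in V_i$ and $b\in V_j$, at least one of $a,b$ can be assigned to $w_e$, because the two classes are disjoint (so $a\neq s_j$ and $b\neq s_i$ automatically) and independence of the transversal rules out having both $s_i=a$ and $s_j=b$. Conversely, given any list colouring, set $s_i:=c(v_i)\in V_i$; for every edge $e=ab\in E(H)$ the set $\{a,b\}\setminus\{c(v_i),c(v_j)\}$ must be nonempty, which rules out $(s_i,s_j)=(a,b)$, so no edge of $H$ lies inside $\{s_1,\ldots,s_k\}$.

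There is essentially no real obstacle beyond bookkeeping: I must output the tree-partition alongside the instance (so that the parameter is really ``the width of a given tree-partition''), and verify that its width is exactly $k$. Since the root bag contains $k$ vertices and all other bags are singletons, this is automatic, and the construction is clearly computable in logspace and hence in FPT time. W[1]-hardness of \textsc{List Colouring} parameterised by the width of a given tree-partition follows.
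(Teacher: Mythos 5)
Your proof is correct and takes essentially the same approach as the paper: the paper reduces from \textsc{Multicoloured Clique} by creating a constraint vertex $x_{uv}$ with list $\{u,v\}$ for each \emph{non}-edge $uv$ of the input, whereas you reduce from \textsc{Multicoloured Independent Set} by creating $w_e$ for each edge — the two reductions are identical up to complementation of the source graph, and the tree-partition (one root bag of size $k$, singleton leaf bags) is the same.
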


\begin{proof}
    We reduce from \textsc{Multicoloured Clique}.
    
    Consider a \textsc{Multicoloured Clique} instance $G=(V,E),V_1,\ldots,V_k$ with $k\geq 2$ colours.
    We denote by $\overline{G}=(V,\overline{E})$ the complement of $G$.
    
    We now describe the construction of our instance graph $H$.
    We first add vertices $v_1,\ldots,v_k$, with lists $L(v_i)=V_i$ for all $i\in [k]$.
    Then for each edge $e=uv \in \overline{E}$ we add a vertex $x_{uv}$ with list $L(x_{uv})=\{u,v\}$. Furthermore, for $\alpha \in \{u,v\}$, let $i$ be such that $\alpha \in V_i$. We add the edge $x_{uv} v_i$. 
    
    The resulting graph has tree-partition-width at most $k$: we put $v_1,\ldots,v_k$ in the same bag which is placed at the centre of a star, and create a separate leaf bag containing $x_{uv}$ for each $uv\in \overline{E}$.
    
    \begin{claim}
        If there is a proper list colouring of $H$, then there is a multicoloured clique in $G$.
    \end{claim}
    
    \begin{proof}
    Suppose that $H$ admits a list colouring. Let $a_i\in V_i=L(v_i)$ be the colour assigned to $v_i$ for all $i\in [k]$. We will prove $a_1,\dots,a_k$ forms a multicoloured clique in $G$.
        
    Consider distinct $i,j \in [k]$ and suppose $a_ia_j$ is not an edge of $G$, that is, $a_ia_j\in \overline{E}$. Then there exists a  vertex $x_{a_ia_j}$ adjacent to both $v_i$ and $v_j$, but there is no way to properly colour it, a contradiction. So we must have $a_ia_j \in E$ as desired.
    \end{proof}
    
    \begin{claim}
        If there is a multicoloured clique in $G$, then there is a proper list colouring of $H$.
    \end{claim}
    
    \begin{proof}
        We denote by $a_1,\ldots,a_k$ the vertices of the multicoloured clique, where $a_i \in V_i$ for all $i$.
        We assign the colour $a_i$ to vertex $v_i$. Consider now $x_{uv}$ for some $uv\in \overline{E}$. Let $i$ and $j$ be given so that $x_{uv}$ is adjacent to $v_i$ and $v_j$. Then $\{u,v\}\neq \{a_i,a_j\}$ since $a_ia_j\in E$ and $uv\in \overline{E}$. Therefore, we may assign either $u$ or $v$ (or both) as colour to $x_{uv}$.
    \end{proof}
    Since \textsc{Multicoloured Clique} is W[1]-hard, this proves that \textsc{List Colouring} parameterised by tree-partition-width is W[1]-hard. 
\end{proof}
We remark that the above proof also shows that \textsc{List Colouring} parameterised by vertex cover is W[1]-hard.

\section{Conclusion}
\label{sec:concl}
In this paper, we combined combinatorial insights and algorithmic tricks to give a space-efficient colouring algorithm. 

By combining Logspace Bodlaender's theorem \cite{ElberfeldJT10}, Lemma \ref{lem:pd} and Lemma \ref{lem:pathwidth},  \textsc{List Colouring} can be solved non-deterministically on graphs of pathwidth $k$ in $O(k\log n)$ space and on graphs of treewidth $k$ in $O(k\log^2 n)$ space.\footnote{First compute a tree decomposition $(T,(B_t)_{t\in V(T)})$ of width $k$ for $G$ in $O(\log n)$ space \cite{ElberfeldJT10}, and then compute a (not necessarily optimal) path decomposition of width $O(\log |V(T)|)$ in $O(\log n)$ space for $T$, and turn this into a path decomposition for $G$ of width $O(k\log n)$ by replacing $t\in V(T)$ with the vertices in its bag $B_t$. Then use Lemma \ref{lem:pathwidth}.}
However, we already do not know the answer to the following question.
\begin{question}
Can a non-deterministic Turing machine solve \textsc{List Colouring} for $n$-vertex graphs of treewidth 2 using $o(\log^2 n)$ space?
\end{question}
Another natural way to extend trees is by considering graphs of bounded treedepth. Such graphs then also have bounded pathwidth (but the reverse may be false). It has been observed for several problems such as \textsc{3-Colouring} and \textsc{Dominating Set} that `dynamic programming approaches' (common for pathwidth or treewidth) require space exponential in the width parameter, whereas there is a `branching approach' with space polynomial in treedepth \cite{ChenRRV18}. A simple branching approach also allows \textsc{List Colouring} to be solved in $O(k\log n)$ space on $n$-vertex graphs of treedepth $k$. We wonder if the approach in our paper can be adapted to improve this further.
\begin{question}
Can \textsc{List Colouring} be solved in $f(k)g(n)+O(\log n)$ space on graphs of treedepth $k$, with $g(n)=o(\log n)$ and $f$ a computable function?
\end{question}
Another interesting direction is what the correct complexity class is for \textsc{List Colouring} parameterised by tree partition width. We do not expect this to be in the W-hierarchy because the required witness size seems to be too large. Moreover, the conjecture \cite[Conjecture 2.1]{PW18} mentioned in the introduction together with Corollary \ref{cor:listcol_tpw} would imply that the problem is not XNLP-hard.

Finally, it would be interesting to study other computational problems than \textsc{List Colouring}. We remark that our results are highly unlikely to generalise to arbitrary Constraint Satisfaction Problems. 
Recall that there is conjectured to be a $k^*\in \N$ for which \textsc{List Colouring} requires $\omega(\log n)$ space for $n$-vertex graphs of treewidth $k^*$.
Since \textsc{List Colouring} on $n$-vertex graphs of treewidth at most $k^*$ can be reduced in logspace to a CSP on at most $n$ variables, each having a lists of size at most $n^{k^*}$, and binary constraints on the variables, such CSP problems would then also require $\omega(\log n)$ space since $k^*$ is a constant.

\paragraph{Acknowledgements} We thank Marcin Pilipczuk and Michał Pilipczuk for discussions.
\bibliographystyle{plain}
\bibliography{refs}

\appendix
\section{Missing proofs from Section \ref{subsec:max_deg}}
\label{app:a}
\subsection{Proof of Lemma \ref{lem:pathwidth}}
Let an $n$-vertex graph $G$ of pathwidth $k$ be given. 

There is a (deterministic) logspace transducer for turning any path decomposition into a nice path decomposition of the same width.
We briefly describe such a transducer. First, introduce the vertices of the first bag. Then, for each new bag, forget the vertices that were in the previous bag but are not in this bag, and then introduce the vertices that are in this bag but not the previous bag. Conclude by forgetting the vertices of the last bag. The variables required are a pointer to the current bag, and two pointers to parse consecutive bags concurrently.

Consequently, we may assume that we can compute a nice path decomposition $(X_1,\dots,X_\ell)$ of width $k$ for $G$ in $O(\log n)$ space. We cannot store this path decomposition, but we can reserve an additional $O(\log n)$ bits on our work space to recompute the relevant part at the points we need it.

We compute and store $\ell$, the number of bags in the path decomposition. We initialise $i=1$; this indicates the bag we are currently considering.

We will also keep track of integers $p_{j},p_j'\in [0,\Delta+1]$ for $j\in [k+1]$. We use the value $0$ for $p_j$ if $X_{i-1}$ has no vertex at position $j$; otherwise, the $j$th vertex of $X_{i-1}$ has been `assigned' the $p_j$th colour in its list. The $p_j'$ are interpreted analogously for the current bag $X_i$ instead of the previous bag $X_{i-1}$. 

We initialise $i=1$ and $p_j=p_j'=0$ for $j\in [k+1]$.
For $i=1,\dots, \ell$, we see whether $X_i$ is a forget bag or an introduce bag.
\begin{itemize}
    \item If $X_i$ is an introduce bag, then we guess an integer $p\in [\Delta+1]$ for the new vertex in $X_i$.
    \item If $X_i$ is a forget bag, we check that the vertex that has been forgotten from $X_{i-1}$ has a different colour from all vertices in $X_{i-1}$ that it is adjacent to.
\end{itemize} 
The $p_j,c_j$ take $O(k\log \Delta)$ bits; an additional $O(\log n)$ bits of space are used for computations and auxiliaries. 

\subsection{Proof of Lemma \ref{lem:pd}}
We will parse the tree using logarithmic space and keep a stack with the index of the guessed colours for the vertices on the path from the root to the current vertex for which we have not yet explored all of their children.
These vertices and the current vertex constitute the current bag of a path decomposition. A vertex
that has been included in the partial decomposition can only be removed once all of its incident
edges have been covered. Hence, these vertices are those that cannot be removed yet.
Since we parse the whole graph, the consecutive values of the stack correspond to a path
decomposition. 
Note that since bag vertices will be removed one by one when parsing the tree, we actually have a nice path decomposition.

We now argue that, by always choosing to explore the largest subtree last (which can be done using deterministic logspace subroutines), we can ensure that the bag has only $O(\log|V(T)|)$ vertices.
We prove this by induction. Consider the root $r$ of tree $T$.
The largest subtree has size at most $|V(T)|-1$, but the current root is removed from the stack
when exploring the child with largest subtree.
A child which does not have the largest subtree has size at most $|V(T)|/2 -1$.
It follows that the maximum number of vertices in the bag for an $n$-vertex tree is $s(n) \leq
\max(s(n-1),1+s(n/2-1))$. We conclude that there are at most $\log n$ vertices in the bag.

\section{Additional details for Section \ref{subsec:warm-up}}
\label{app:b}
We denote the list of children of $v$ by $D(v)$, and denote by \texttt{largest} a function that, given a vertex of the tree, returns a child with largest subtree. 
A precise description is given in Algorithm \ref{algo:l2-algo}.

\begin{figure}[ht]
\centering
\begin{minipage}{0.6\linewidth}
\begin{algorithm}[H]
\DontPrintSemicolon
\SetKwFunction{Solve}{Solve}
\SetKwProg{Fn}{}{:}{}
\SetKwIF{If}{ElseIf}{Else}{if}{:}{else if}{else :}{}
\SetKwFor{For}{for}{:}{}
\Fn{\Solve{v,p}}{
\eIf{$|L(v)-\{c\}|>|D(v)|$}
{\eIf{$\forall u \in D(v)$, \Solve{$u,-1$}}{\KwRet \textbf{true}}{\KwRet \textbf{false}}}
{$c' \leftarrow -1$\;
$u' \leftarrow \texttt{largest}(v)$\;
\For{$u \in D(v)-\{u'\}$}
{\If{$\neg$ \Solve{$u,-1$}}{\KwRet \textbf{false}}}
\For{$a \in L(v)-\{c\}$}
{\For(\textit{(ordered by subtree size)}){$u \in D(v)-\{u'\}$}{
\If{$\neg$ \Solve{$u,a$}}{try next $a$}}
\eIf{$c' = -1$}{$c' \leftarrow a$}
{\textbf{free}($v,c,a,u,c'$)\;
\KwRet \Solve{$u',-1$}}
}
\eIf{$c'=-1$}{\KwRet \textbf{false}}
{\textbf{free}($v,c$)\;
\KwRet \Solve{$u',c'$}}
}
}

\caption{Simple algorithm using $O(\log(n)^2)$ space}
\label{algo:l2-algo}
\end{algorithm}
\end{minipage}
\end{figure}

\begin{theorem}
\label{thm:first}
Algorithm \ref{algo:l2-algo} solves \textsc{List Colouring} on $n$-vertex trees in polynomial time and $O(\log^2n)$ space.
\end{theorem}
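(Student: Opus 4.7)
The plan is to establish three things in turn: correctness, the $O(\log^2 n)$ space bound, and polynomial running time.

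For correctness, I would prove by induction on $|V(T_v)|$ that, for every $v$ and every $c\in\{-1\}\cup\{1,\dots,|L(v)|\}$, the call \textsc{Solve}$(v,c)$ returns \textbf{true} exactly when $T_v$ admits a list colouring in which $v$ avoids the $c$th colour of $L(v)$ (or any colour, when $c=-1$). The base case of a leaf is immediate from the guard $|L(v)-\{c\}|>|D(v)|$. For the inductive step, the two top-level branches of the pseudocode match the dichotomy of Section~\ref{subsec:warm-up}: the non-critical branch reflects that when $|L(v)-\{c\}|>|D(v)|$ any colouring of the children can be extended to $v$, so $T_v$ is colourable iff each $T_u$ is; and the critical branch enumerates candidate colours $a\in L(v)-\{c\}$ for $v$, verifying via the recursive calls \textsc{Solve}$(u,a)$ that every non-heavy subtree accommodates the choice $a$. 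The code records the first successful $a$ in $c'$ and, if a second successful $a$ exists, recognises $v$ as effectively non-critical and tail-recurses on $u'$ with no constraint; otherwise the unique $c'$ is handed down to $u'$. This reproduces the (rej)/(nc)/(cr) trichotomy and yields the inductive step.

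For the space bound I would follow the sketch at the end of Section~\ref{subsec:warm-up}: introduce a virtual counter $\r$ that is incremented only when \textsc{Solve} recurses into a non-heavy child, and partition the work tape into $\lceil\log n\rceil+1$ slices of $O(\log n)$ bits, with slice $i$ used whenever $\r=i$. The state visible to a single invocation consists of $v$, $c$, $c'$, the current $a$ and the current child pointer $u$, all of which fit in $O(\log n)$ bits. The explicit \textbf{free} directives in the pseudocode guarantee that the tail calls \textsc{Solve}$(u',-1)$ and \textsc{Solve}$(u',c')$ reuse the current slice rather than extending the stack, so $\r$ only grows on non-heavy descents; Observation~\ref{obs} then gives $\r\le\log n$, and multiplying yields the $O(\log^2 n)$ total.

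The time analysis is the subtler ingredient and is the main obstacle. The work done in a single invocation of \textsc{Solve}$(v,c)$ outside of recursive calls is $O(|L(v)|\cdot|D(v)|\cdot\log n)$, driven by the two nested loops and the logspace list-index comparisons. It then suffices to bound the total number of invocations. A call \textsc{Solve}$(v,c)$ spawns at most one invocation on the heavy child $u'$ and at most $|L(v)|+1$ on each non-heavy child (one per candidate $a\in L(v)-\{c\}$, plus the initial $-1$ call). In the critical branch one has $|L(v)|\le|D(v)|+1$, keeping this factor at $O(|D(v)|)$, and in the non-critical branch only a single call per child is made. I would combine this with Observation~\ref{obs}, which guarantees that any descent into a non-heavy child at least halves the subtree, to show by induction on $|V(T_v)|$ that the total number of invocations rooted at $v$ is polynomial in $|V(T_v)|$. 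The subtle point is the potential blowup when a high-degree vertex appears on the heavy path and is called with many colour constraints; this is absorbed by amortising the $O(|D(v)|)$ factor against the at-least-halving guarantee of non-heavy recursion. Multiplying polynomial call count by polynomial per-call work gives the claimed running time.
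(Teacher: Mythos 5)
Your correctness and space arguments are sound and essentially match the paper's: the inductive trichotomy matches the (rej)/(nc)/(cr) structure, and your tape-slice argument for space is equivalent to the paper's recurrence $S(n)\le\max(O(\log n)+S(n/2),\ S(n-1))$.

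The time analysis, however, has a genuine gap, and it sits exactly at the spot you flag as ``the main obstacle.'' You upper bound the number of calls on each non-heavy child by $|L(v)|+1 = O(|D(v)|)$, giving a recurrence of the shape $T(n)\le T(x_{k+1}) + k\sum_{i=1}^k T(x_i) + O(k^2\log n)$, where $x_1\le\dots\le x_k$ are the non-heavy subtree sizes and $x_{k+1}$ the heavy one. This recurrence does \emph{not} admit a polynomial solution. Concretely, take a tree whose root has a heavy subtree of size $n/2$, one non-heavy subtree of size $n/4$, and about $n/4$ further singleton non-heavy children; then $k\approx n/4$ and $x_k = n/4$, so the $k\cdot T(x_k)$ term already forces $T(n)\gtrsim (n/4)\,T(n/4)$. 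Iterating this shape down the non-heavy path gives a super-polynomial lower bound for any solution of your recurrence, even though the algorithm itself is in fact polynomial on that instance. ``Amortising the $O(|D(v)|)$ factor against the at-least-halving guarantee'' does not repair this, because the factor hits the \emph{largest} non-heavy subtree, which is only shrunk by a factor of~$2$, not by a factor comparable to $k$.

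The missing idea, which the paper uses, is a per-child refinement: because \textsc{Solve}$(v_i,-1)$ already succeeded, each $v_i$ has at most one forbidden colour, and since the inner loop is ordered by subtree size, the $i$-th smallest non-heavy child is re-entered for at most $k+3-i$ values of $a$ (two full successes plus at most one first-failure at each of $v_i,\dots,v_k$), independently of $|L(v)|$. In particular the \emph{largest} non-heavy child is called only $O(1)$ times. Pairing this with the structural bound $x_i\le y/(k+1-i)$ (where $y=\sum_i x_i$) turns the cost sum into $\sum_i(k+3-i)\,T(x_i)\le 3y^\alpha\sum_{j\ge1}j^{1-\alpha}$, which converges for $\alpha\ge3$; the paper closes the recurrence at $T(n)=Cn^4$ via a convexity argument. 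Without this weighted amortization, your inductive bound on the number of invocations cannot be established.
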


\begin{proof}
Correctness is easy to verify. 
\begin{claim}
    The algorithm uses $O(\log^2 n)$ bits of space.
\end{claim}

\begin{proof}
    Denote by $S(n)$ the maximum amount of memory used for a tree of order $n$.
    The largest subtree, and the next subtree in the size ordering can be computed in $O(\log n)$ space and linear time. The local variables require $O(\log n)$ memory, and we free their space before recursing to the largest subtree. We conclude that we have $S(n) \leq \max( O(\log n) + S(n/2), S(n-1))$.
    The master theorem \cite{TextbookMasterTheorem} allows us to conclude that $S(n) = O(\log^2 n)$.
\end{proof}

\begin{claim}
    The algorithm runs in polynomial time.
\end{claim}

\begin{proof}
    Denote by $T(n)$ the maximum amount of time used for a tree of order $n$.
    Let $r$ be the root of the tree and $v_1,\ldots,v_k,v_{k+1}=u$ be its  children ordered by non-decreasing size of their subtree, with ties broken using the order given by the input. We denote by $x_i$ the size of the subtree of root $v_i$. Note that unlike the other subtrees, there is only one recursion on the subtree $T_{u}$. Finding the largest subtree or the next subtree in the size ordering is done in linear time (at most $cn$ for some constant $c$). Since \textsc{solve}($v_i,0)$ succeeded, for all $i\in [k]$ there is at most one $p_i$ for which \textsc{solve}($v_i,p_i)$ fails. We have to run \textsc{solve}$(v_i,p')$ for $p'>p\geq 1$ only when \textsc{solve}$(v_j,p)$ fails for some $j>i$. This can happen at most $k+1-i$ times. We find that
\begin{align*}
    T(n)&\leq  T(x_{k+1}) + cn + \sum_{i=1}^{k}T(x_i) + \sum_{i=1}^{k} (k+2-i)(T(x_i)+cn)\\
    &\leq T(x_{k+1}) + \frac{c}{2}n^3 + \sum_{i=1}^{k}(k+3-i)T(x_i).
\end{align*}
    Let $y= x_1+\dots+x_k$.
    Note that $x_i\leq \frac{y}{k+1-i}$: $x_{i}$ is the smallest among the $k+1-i$ elements $x_i,x_{i+1}, \dots,x_{k}$, which sum up to at most $y$. So for $T(x)=x^\alpha$, we find
    \[
    \sum_{i=1}^{k}(k+3-i)T(x_i)\leq 
    \sum_{i=0}^{k-1}(i+3)\left(\frac{y}{i+1}\right)^{\alpha}\leq
    3y^{\alpha}\sum_{i=1}^{\infty} \frac{1}{i^{\alpha-1}}.
    \]
    For $\alpha=4$, $\sum_{i=1}^{\infty} \frac{1}{i^{\alpha-1}} \leq \frac{4}{3}$.
    So with $T(x)=Cx^4$ for some constant $C\geq c$, the time bound of $T(n)$ holds for all sufficiently large $n$, since 
    \[
    T(n)\leq \max_{0\leq y\leq n/2}
    C(n-1-y)^{4} + \frac{c}{2}n^3 + 4Cy^{4}
    \]
    Note that $h:y \mapsto C(n-1-y)^4 + \frac{c}{2}n^3 + 4Cy^4$ is a convex function so its maximum value on $[0,n/2]$ is reached at 0 or at $n/2$. Hence, since
    $$h(0) = C(n-1)^4+\frac{c}{2}n^3 \leq Cn^4$$
    and
    $$h(n/2)=C(n/2-1)^4 + \frac{c}{2}n^3 + 4C(n/2)^4 \leq \frac{5}{16}Cn^4 + \frac{c}{2}n^3 \leq Cn^4$$
    we conclude that for all sufficiently large $n$:
    $$T(n)\leq Cn^4$$ \qedhere
\end{proof}
The two claims prove the theorem.
\end{proof}

\section{Additional details for Section \ref{subsec:alg}}
\label{app:c}
We give a more detailed description of the main algorithm. 
We indicate in blue whenever we make a \textcolor{blue}{recursive call} and in red when we make a \textcolor{red}{same-depth call}, to aid the reader to keep track of what the overhead information is that is stored before doing the call.

We also point out one more technical detail: we allow the algorithm to have a non-existing colour restriction of the form $(j,p_j)$ for $p_j>|L_j(v')|$. If the algorithm discovers this is the case, it will also reject. Note that if $v$ may receive many colours itself, it may never look at the given colour constraints and still accept; the algorithm will then reject either at a different child $v_2$ of $v'$ who does care about the colour constraint, or $L_j(v')=L(v')$ and $v'$ will reject itself in step 2 at some point.
\begin{enumerate}
    \item[0.] Let $v$ be the vertex pointed at by $\pos_\r$.
    Suppose first that $v$ is a leaf. We return \texttt{false} if $|L(v)|=0$ and \texttt{true} if $|L(v)|\geq 2$. \\
    The remaining case is $|L(v)|=1$. If $\colres_\r$ gives no colour restrictions, return $\texttt{true}$. Otherwise, let $v'$ denote the critical parent of $v$ and let $p_{j'}'$ be the position of the colour in $L_{j'}(v')$, where $(j',p_{j'})$ has been encoded in $\colres_\r$. We compute $p_\M'$ from $p_{j'}$ as is done in step 3. We check if this colour equals the colour in $L(v)$, and return \texttt{true} or \texttt{false} accordingly. 
    
    So we assume now that $v$ is not a leaf, so it has some heavy child $h$. We maintain that $\colres_\r$ encodes at most one colour $c'$ that $v$ must avoid. We go to 1, which will eventually lead us to one of the following:
    \begin{enumerate}
    \item[(rej)] There is no list colouring of $G$ avoiding $c'$ for $v$. We return \texttt{false}.
    \item[(nc)] The vertex $v$ can get two colours (unequal to $c'$) in list colourings of $G$. In this case, we say $v$ is non-critical. We update $\pos_\r$ such that we point at $h$, update $\colres_\r$ to `no colour constraint' and go to the start of 0 again.
    \item[(cr)] There is a unique colour $c\neq c'$ that $v$ can get in a list colouring of $G$. In this case, $L_0(v)\subseteq \{c,c'\}$ and so $p_0(c,v)$ can be represented in a single bit. We update $\pos_\r$ such that we point at $h$, update $\colres_\r$ to $(0,p_0)$ and go to the start of 0 again.
\end{enumerate}
    \item We first check that all subtrees can be coloured if we do not have any colour restrictions.
    \begin{enumerate}[(i)]
        \item Set $j=\M$ (written in unary).
        \item Set $L$ to be the number of children of $v$ whose subtree falls in the $j$th bracket and set $\ell=1$.
        \item If $\ell\leq L$, we check as follows that $T_u$ can be list coloured for the $\ell$th child $u$ of the $j$th bracket of $v$. 
        We save the relevant information of our current state, namely $($`$1(iii)$'$,\pos_\r,\colres_\r,\ell,L,j)$, to $\auxi_\r$.
        We then update $\pos_{\r}=(\pos_\r,j,\ell)$ and $\colres_r$ to `no colour constraint'. We do a \textcolor{blue}{recursive call} on \texttt{process} and then reset $\ell,j,L,\pos_\r,\colres_r$ again using $\auxi_\r$ (and remove them again from $\auxi_\r$). \\
        If the outcome was \texttt{true}, increase $\ell$ by one and repeat 1(iii). \\
        If the outcome was \texttt{false}, we go to (rej): if $T_u$ cannot be list coloured if there are no constraints, then $T$ cannot be list coloured.
        \item  At some point $\ell>L$. If $j>0$, we decrease $j$ and go to 1(ii). \\
        If $j=0$, we checked all brackets. If $|L(v)|>d(v)+2$, then $v$ is non-critical so we move to (nc). 
    \end{enumerate}
\item We start with $p_0=1$ and $j=0$ (again, $j$ is written in unary). Suppose that we wish to compute $p_{j+1}$ from $p_{j}$ for some $j\in [0,\M-1]$. 
\begin{enumerate}[(i)]
    \item Initialise $\texttt{curr}_j=1$. This represents a position in $L_j(v)$ (giving the number of `successes').\\
Initialise $\texttt{prev}_j=1$. This represents a position in $L_{j+1}(v)$ (giving the number of `tries').  
    \item Check if the $\texttt{prev}_j$th colour of $L_{j+1}(v)$ works for the trees in the $j$th bracket. 
    \begin{enumerate}[(a)]
        \item Set $L$ to be the number of children of $v$ whose subtree falls in the $j$th bracket and set $\ell=1$.
        \item If $\ell\leq L$, we check as follows that $T_u$ can be list coloured for the $\ell$th child $u$ of the $j$th bracket of $v$, assuming that $v$ gets the $\texttt{prev}_j$th colour from $L_j(v)$.\\
        We append $($`$2(ii)(b)$'$,p_j,\texttt{curr}_j,\texttt{prev}_j,\pos_\r,\colres_\r,\ell,L,j)$ to $\auxi_\r$, set $\pos_{\r}=(\pos_\r,j,\ell)$ and $\colres_r=\texttt{prev}_j$, do a \textcolor{blue}{recursive call} on \texttt{process} and then reset $\texttt{curr}_j,\texttt{prev}_j,p_j,\ell,j,L,\pos_\r,\colres_r$ again using $\auxi_\r$ (and also remove them again from $\auxi_\r$). \\
        If the outcome was \texttt{true}, we increase $\ell$ by one and repeat this step (b). If the outcome was \texttt{false}, we go to (iv).
        \item  Otherwise, $\ell>L$ and all steps succeeded. If $\texttt{curr}_j<p_{j}$, we increase both $\texttt{curr}_j$ and $\texttt{prev}_j$ and go to step (ii). \\
    If $\texttt{curr}_j=p_{j}$, then we have successfully computed $p_{j+1}$ as $\texttt{prev}_j$. 
    \end{enumerate}
    \item Suppose that one of the checks above failed. If $\texttt{prev}_j\geq 2^{j+1}+3$, then $|L_{j+1}(v)|\geq 2^{j+1}+3$ and we go to (nc). Otherwise, we increase $\texttt{prev}_j$ and go to step (ii).
\end{enumerate}
We compute $p_\M$ from $p_{j+1}$ via $p_{j+2},\dots,p_{\M-1}$.
Once we have computed $p_\M$ from $p_0=1$, we check that $p_\M\leq |L(v)|$; if it is not, we go to (rej).
\item Otherwise, we have verified that $|L_0(v)|\geq 1$. We repeat this for $p_0=2$ and $p_0=3$, although now we do not reject if we cannot find a corresponding $p_\M$. 
    \begin{itemize}
        \item If $|L_0(v)|\geq 3$, then we move to (nc). If there were no colour constraints (in $\colres_\r$) and $|L_0(v)|=2$, then we also move (nc).
        \item If there were no colour constraints, we move to (cr) where we will set $\colres$ to $j=0,p_0=1$.
    \end{itemize}
Suppose a colour $c'$ has been encoded via the position $p_{j'}'$ of $c'$ in $L_j(v')$ for some $j'\in [0,\M']$ (where $\M'=\M_{v'}$). 

We recompute $p_{\M'}'$ from $p_j'$ in a similar manner to the above. Recall that we must first recompute $p_{j'}$ from $p_0$ and then simultaneously recompute $p_{j'+1}$ from $p_{j'}$ and $p'_{j'+1}$ from $p'_{j'}$ until we computed $p_\M$ and $p_{\M'}'$. We can then check whether the $p_\M$th colour of $L(v)$ equals $c'$, the $p_{\M'}'$th colour of $L(v')$. 

We spell our now how we compute $p_{x+1}'$ from $p_{x}'$ for the parent $v'$ of $v$.
\begin{enumerate}[(i)]
    \item Initialise $\texttt{curr}_{x}=1$ and $\texttt{prev}_{x}=1$.
    \item Set $\ell=1$ and $L$ to be the number of children of $v'$ whose subtree falls in the $x$th bracket.
    \item If $\ell\leq L$, then we check the $\ell$th tree of the $x$th bracket. We append  $($`$3(iii)$'$,\alpha, p_{x+1}, p_{x}', \pos_\r, \colres_\r, \ell, L, x, \texttt{prev}_{x},$ $\texttt{curr}_{x},p_0)$ to $\auxi_r$, point $\pos_\r$ to the correct non-heavy child of $v'$ and set $\colres_r=(x+1,\texttt{prev}_{x})$.
    \begin{itemize}
        \item Suppose $v$ is the heavy child of $v'$.
        We do a \textcolor{blue}{recursive call} on \texttt{process}, similar to before. 
        \item Suppose $v$ is a non-heavy child of $v'$. We do a \textcolor{red}{same depth call} on \texttt{process}: we call it without increasing $\r$ and do not decrease $\r$ at the end of it.
    \end{itemize}
    We reset everything again using $\auxi_r$, also cleaning up in $\auxi_r$ again.\\
    If we received \texttt{false}, we increase $\texttt{prev}_x$ and repeat step (ii).\\
    If we received \texttt{true}, we increase $\ell$ by one and repeat step (iii).
    \item If $\ell>L$, we have checked all trees in the $j$th bracket. \\
    If $\texttt{curr}_x<p_{x}'$
    we increase both $\texttt{curr}_x$ and $\texttt{prev}_x$ and repeat step (ii).\\
    If $\texttt{curr}_x=p_{x}$, we successfully computed $p_{x+1}'=\texttt{prev}_x$.
    \end{enumerate}
    If $p_{\M'}>|L(v')|$, we go to (rej).\\
    If $|L_0(v)|=1$, we go to (rej) if the $p_{\M'}$th colour of $L(v')$ is equal to the $p_\M$th colour of $L(v)$, and go to (cr) otherwise. \\
    If $|L_0(v)|=2$, and the colour are equal and $p_0=1$, we go to (cr) with $j=0,p_0=2$ (since $v$ must get the $p_0$th colour of $L(v)$). If the colours are equal and $p_0=2$, we go to (cr) with $j=0,p_0=1$. \\
    If $p_0=2$ and the colours are not equal, we go to (nc). If $p_0=1$ and the colours are not equal, we repeat the thing above for $p_0=2$, first removing $p_{\M}$ and $p_{\M'}'$ again to clean up the space.
\end{enumerate}

\end{document}